\newtheorem{lemma}{Lemma}
\newtheorem{proposition}{Proposition}
\newtheorem{theorem}{Theorem}
\theoremstyle{definition}
\newtheorem{remark}{Remark}
\newtheorem{algorithm}{Algorithm}
\newcommand{\supp}{\text{supp}\,}
\newcommand{\R}{\mathbb R}
\newcommand{\E}{\mathbb E}
\newcommand{\T}{\text T}
\DeclareMathOperator*{\argmin}{arg\,min}
\providecommand{\keywords}[1]{\textbf{\textit{Keywords---}} #1}
\begin{document}

\title{Optimal Riemannian quantization\\
with an application to air traffic analysis}
\date{}
\author[1]{Alice Le Brigant}
\author[1]{Stéphane Puechmorel}
\affil[1]{ENAC, Université de Toulouse, Toulouse France}
\maketitle

\begin{abstract}
The goal of optimal quantization is to find the best approximation of a probability distribution by a discrete measure with finite support. When dealing with empirical distributions, this boils down to finding the best summary of the data by a smaller number of points, and automatically yields a $K$-means-type clustering. In this paper, we introduce Competitive Learning Riemannian Quantization (CLRQ), an online algorithm that computes the optimal summary when the data does not belong to a vector space, but rather a Riemannian manifold. We prove its convergence and show simulated examples on the sphere and the hyperbolic plane. We also provide an application to real data by using CLRQ to create summaries of images of covariance matrices estimated from air traffic images. These summaries are representative of the air traffic complexity and yield clusterings of the airspaces into zones that are homogeneous with respect to that criterion. They can then be compared using discrete optimal transport and be further used as inputs of a machine learning algorithm or as indexes in a traffic database.
\end{abstract}

\keywords{optimal quantization; clustering; Riemannian manifolds; air traffic analysis.}

\section{Introduction}

Most of the statistical tools developed so far are dedicated to data belonging to vector spaces, since it is the most convenient setting for algorithms as well as theoretical derivations. However, when dealing with real world applications, such a framework may not fit with the structure of the data. It is obviously the case for geostatistics over world-sized datasets, but it is also true in many other fields:  shapes in computer vision, diffusion tensor images in neuroimaging, signals in radar processing do not belong to a Euclidean space, but rather to a differentiable manifold. Riemannian geometry provides a convenient framework to deal with such objects. It allows a straightforward generalization of basic statistical notions such as means and medians \cite{frechet1948,karcher1977,arnaudon2013medians}, covariance \cite{pennec2006intrinsic}, Gaussian distributions \cite{said2017}, and of usual linear operations such as principal component analysis \cite{fletcher2004,sommer2010}. The use of these statistical tools has met a growing interest in various fields, including shape analysis \cite{kendall1984}, computational anatomy \cite{fletcher2004}, medical imaging \cite{fletcher2007}, probability theory \cite{bigot2017}, and Radar signal processing \cite{arnaudon2013riemannian,lebrigant2017}.

In air traffic management (ATM), a major concern is the ability to infer an estimation of the complexity as perceived by a controller from the knowledge of aircraft trajectories in a given airspace, as depicted in Figure \ref{fig:traffic}. Many interdependent factors are involved in the cognitive process of a human controller, making the problem extremely difficult to solve, if even possible. However, there is a consensus among the experts on the importance of traffic disorder. As detailed in \cite{alldata2018}, a good way to estimate traffic disorder is to assume the spatial distribution of the aircraft velocities to be Gaussian and use the covariance function as an indicator for local complexity. This model requires to estimate a covariance matrix at each sample point of the airspace, and therefore yields a mapping from the plane to the space of symmetric, positive definite (SPD) matrices. Such a mapping will be referred to as an (SPD) image in the sequel. Working directly with the SPD images is an extremely computationally expensive task, that is unrealistic in practice. Moreover, the information provided is highly redundant, making it cumbersome to use in statistical analysis. To cope with this problem, we seek to produce summaries of the images. To do so, we model an SPD image as a realization of a random field with values in the space of SPD matrices. The values collected at each sample point in the image define an empirical probability distribution, supported on the space of SPD matrices. We propose to produce a summary of this empirical distribution using optimal quantization. 

\begin{figure}
\subfloat{\includegraphics[width=0.3\textwidth,height=0.3\textwidth]{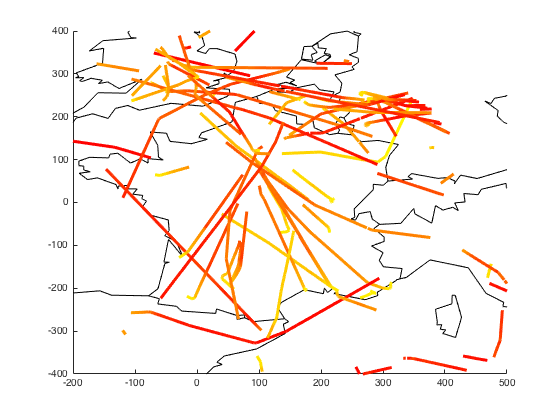}}
\subfloat{\includegraphics[width=0.3\textwidth,height=0.3\textwidth]{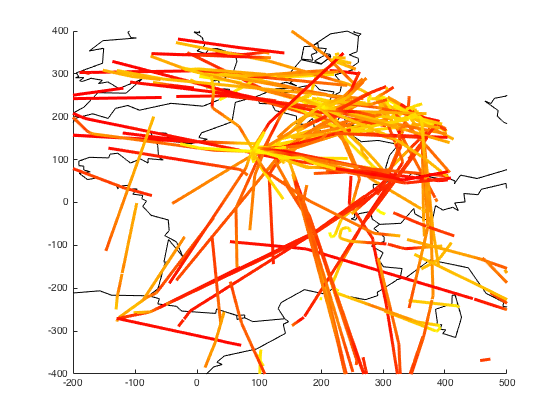}}
\subfloat{\includegraphics[width=0.3\textwidth,height=0.3\textwidth]{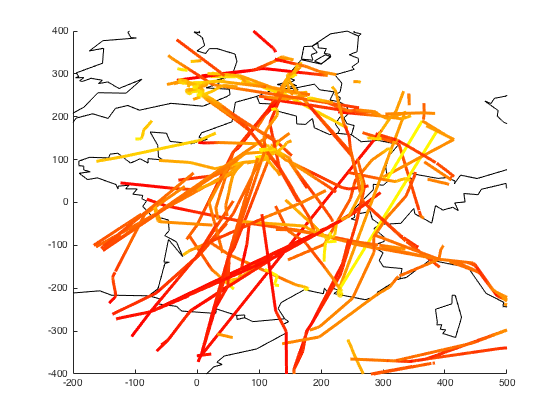}}
\caption{Traffic over the French airspace during one-hour periods of time.}
\label{fig:traffic}
\end{figure}

Optimal quantization is concerned with finding the best approximation, in the sense of the Wasserstein distance, of a probability distribution $\mu$ by a discrete measure $\hat\mu_n$ with finite support $|\text{supp}\hat\mu_n|\leq n$ (see \cite{graf2007} or the survey paper \cite{pages2015}, and references therin). When dealing with empirical distributions, this boils down to finding the best summary of the data by a smaller number of points. In the same setting, optimal quantization naturally yields a clustering of the data points, which coincides with the solution given by the $K$-means algorithm. In our application, probability measures are supported on the space of SPD matrices, and therefore points actually refer to SPD matrices. The main challenge lies in the fact that SPD matrices do not form a vector space, but rather a Riemannian manifold \cite{pennec2006riemannian}, whereas most work on optimal quantization is suited for vector data \cite{pages2004, bouchitte2011}, or functional data \cite{pages2006, biau2008}. Nonetheless, the case of probability distributions on Riemannian manifolds has recently received attention \cite{kloeckner2012,iacobelli2016}. In particular, the asymptotical behavior of the quantization error, i.e. the evolution of the error made by approximating $\mu$ by $\hat\mu_n$ as $n\to\infty$, was studied for the manifold case in \cite{iacobelli2016}. However, to the best of our knowledge, no numerical schemes have yet been introduced in this setting. 

In this work, we introduce Competitive Learning Riemannian Quantization (CLRQ), a Riemannian counterpart of Competitive Learning Vector Quantization (see for example \cite{pages2015}). It is a gradient descent algorithm that computes the best approximation $\hat\mu_n$ of a probability measure $\mu$ over a Riemannian manifold using observations sampled from $\mu$. For empirical distributions, this allows to summarize a manifold-valued dataset of size $N$ by a smaller number $n\ll N$ of points, which additionally correspond to the centers of a $K$-means-type clustering. Since it is an online algorithm, it is suited to process large datasets, and we show that it is convergent. We then use it to perform air traffic complexity analysis as presented above. Applying CLRQ to the empirical distribution of an SPD image yields two desirable results: (1) a summary of the image through $\hat\mu_n$, and (2) a clustering of the image, and thereby of the corresponding airspace, into different zones homogeneous with respect to complexity. The latter means that each point of the image is assigned a level of complexity according to the class it belongs to, and the former, that different traffic images can easily be compared through the comparison of their summaries, using the Wasserstein distance. This is an interesting prospect, since it allows for the indexation of air traffic situations in view of creating automatized decision-making tools to help air traffic controllers. Naturally, CLRQ can conceivably be applied to many other application-driven problems, requiring either the construction of summaries or the clustering of geometric data. A natural example of such data in positive curvature is given by GPS positions on the Earth (spherical data); in negative curvature, one can think of Gaussian distributions, which can be parameterized by the hyperbolic half-plane in the univariate case, and by higher dimensional Riemannian symmetric spaces in the multivariate case \cite{calvo1991,lovric2000}.

The paper is organized as follows. In Section \ref{sec:math}, we introduce the Riemannian setting and give an example of a statistical object on a manifold (the Riemannian center of mass), before introducing the context of optimal quantization. In Section \ref{sec:clqm}, we present the CLRQ algorithm and show its convergence. After showing some simple illustrations on the circle, the sphere and the hyperbolic half-plane in Section \ref{sec:ex}, we present our main application in air traffic management in Section \ref{sec:atm}.

\section{Mathematical setup}\label{sec:math}

\subsection{Notions of Riemannian geometry}

Let us begin by introducing some notations and reminding some basic notions of Riemannian geometry. We consider a differentiable manifold $M$ of dimension $d$ equipped with a Riemannian metric, i.e. a smoothly varying inner product $\langle\cdot,\cdot\rangle_x$ defined on each tangent space $T_xM$ at $x\in M$. Recall that $T_xM$ is a linear approximation of $M$ at point $x$, and contains all tangent vectors to $M$ at $x$. The subscript $x$ in the metric will often be omitted and the norm associated to the Riemannian metric $\langle\cdot,\cdot\rangle$ will be denoted by $\|\cdot\|$. Letting $\nabla$ be the Levi-Civita connection, the geodesics of $M$ are the curves $\gamma$ satisfying the relation $\nabla_{\dot{\gamma}}\dot{\gamma}=0$, which implies that their speed has constant norm $\|\dot\gamma(t)\|=cst$. They are also the local minimizers of the arc length functional $l$:
\[
l \colon \gamma \mapsto \int_0^1 \|\dot{\gamma}(t) \| dt
\]
where in the previous expression curves are assumed, without loss of generality, to be defined over the interval $[0,1]$.
The exponential map at point $x$ is the mapping, denoted by $\exp_x$, that maps a tangent vector $v$ of an open set $U\subset T_xM$ to the endpoint $\gamma(1)$ of the geodesic $\gamma:[0,1]\rightarrow M$ verifying $\gamma(0)=x$, $\dot \gamma(0)=v$,
\begin{equation*}
\exp_x(v) = \gamma(1). 
\end{equation*}
Intuitively, the exponential map moves the point $x$ along the geodesic starting from $x$ at speed $v$ and stops after covering the length $\|v\|$. Conversely, the inverse of the exponential map $\exp_x^{-1}(y)$ gives the vector that maps $x$ to $y$, and it will be denoted by the more intuitive notation $\overrightarrow{xy}$. We assume that $M$ is complete, i.e. that the exponential map at $x$ is defined on the whole tangent space $T_xM$. By the Hopf-Rinow theorem, we know that $M$ is also geodesically complete, that is, any two points $x,y\in M$ can be joined by a geodesic of shortest length. This minimal length defines the geodesic distance between $x$ and $y$, denoted in the sequel by $d(x,y)$. For further details, we refer the reader to a standard textbook such as \cite{jost2008}. 

\subsection{The Riemannian center of mass}

Before introducing optimal quantization, let us briefly give an example of generalization of a fundamental statistical notion to the manifold setting: the Riemannian center of mass of a probability distribution, also called the Fr\'{e}chet mean \cite{frechet1948}. Consider a probability measure $\mu$ on $M$ with density and a compact support $K=\text{supp}\mu$. As a compact set, $K$ is contained in a geodesic ball $B(a,R)=\{x\in M, d(a,x)<R\}$, and we assume that this ball is convex, i.e. that for all points $x,y\in \bar B(a,R)$, there exists a unique minimizing geodesic from $x$ to $y$ (sufficiently small geodesic balls are convex \cite[Theorem 5.14]{cheeger2008}). Let $X$ be a random variable of law $\mu$. The Fr\'{e}chet mean $\bar x$ of $X$ generalizes the minimum-square-error property of the Euclidean mean
\begin{equation}
\label{mean}
\bar x = \mathbb E_\mu[X]=\argmin_{a\in M}\int_M d(x,a)^2\mu(dx).
\end{equation}
It can be characterized as the point where the gradient of the above functional vanishes, which gives
\begin{equation}
\label{mean2}
\int_M \overrightarrow{\bar xx}\mu(dx) = 0.
\end{equation}
This is due to the fact that the gradient of the functional $f:a\mapsto d(x,a)^2$ is given by $\nabla_af  = -2 \overrightarrow{ax}$ (see Lemma \ref{lemgrad} in the Appendix), and to the assumptions we made on $M$ and $\supp\mu$. The same assumptions guarantee that $\bar x$ exists and is unique \cite{pennec2006intrinsic}. When $\mu$ is an empirical probability measure equally distributed on $n$ data points $x_1,\hdots,x_N$, the mean can be computed using a gradient descent algorithm, where the update is given by
\begin{equation*}
\bar x\leftarrow \exp_{\bar x}\left(\frac{1}{N}\sum_{i=1}^N \overrightarrow{\bar xx_i}\right).
\end{equation*}
This algorithm is often denoted by the Karcher flow.

\subsection{Optimal Riemannian quantization}
Optimal quantization addresses the problem of approximating a random variable $X$ of distribution $\mu$ by a simplified version $q(X)$ where $q: M\rightarrow \Gamma\subset M$ is a measurable function with an image $\Gamma$ of cardinal at most $n$. The function $q$ is called an \emph{$n$-quantizer} and is chosen to minimize the $L^p$ criteria
\begin{equation}\label{cost1}
\mathbb E_\mu \left[d(X,q(X))^p\right].
\end{equation}
Since $q$ takes only a finite number of values, the distribution of $q(X)$ will be a finite sum of point measures. 
It is well known that, since any $n$-quantizer $q$ of image $\Gamma \subset M$ verifies for all $x\in M$, $d(x,q(x))\geq \inf_{a\in\Gamma}d(x,a)$, with equality if and only if $q(x)=\argmin_{a\in\Gamma} d(x,a)$, the optimal quantizer is the projection to the nearest neighbor of $\Gamma$. Moreover, if $|\Gamma|<n$ and $|\supp\mu|\geq n$, one easily checks that $q$ can always be improved, in the sense of criteria \eqref{cost1}, by adding an element to its image. This means that an optimal $n$-quantizer has an image of exactly $|\Gamma|=n$ points. Therefore, the optimal $n$-quantizer for criteria \eqref{cost1} is of the form $q_\Gamma: M\rightarrow \Gamma=\{a_1,\hdots,a_n\}$, where the $a_i$'s are pairwise distinct, and
\begin{equation*}
q_\Gamma(\cdot) = \sum_{i=1}^n a_i \mathbf{1}_{C_i(\Gamma)}(\cdot), \quad \text{where} \quad C_i(\Gamma)=\{ x\in M, d(x,a_i) \leq d(x,a_j) \, \forall j\neq i\}.
\end{equation*}
The set $C_i(\Gamma)$ is the $i^{th}$ Voronoi cell associated to $\Gamma$ and the union of all these cells form the Voronoi diagram. The quantization problem is therefore equivalent to the approximation of $\supp\mu$ by an $n$-tuple $(a_1,\hdots,a_n)\in M^n$ minimizing the cost function $F_{n,p}: M^n \rightarrow \R_+$,
\begin{equation}\label{cost2}
F_{n,p}(a_1,\hdots,a_n)= \E_\mu\left(\min_{1\leq i \leq n} d(X,a_i)^p\right)=\int_M \min_{1\leq i\leq n}d(x,a_i)^p\mu(\mathrm dx).
\end{equation}
This cost function is obtained by evaluating \eqref{cost1} at $q=q_\Gamma$, and is called the \emph{distorsion function} of order $p$ for $\mu$. Notice that if we seek to approximate $\mu$ by a single point $a\in M$ (i.e. $n=1$) with respect to an $L^2$ criteria ($p=2$), we retrieve the definition \eqref{mean} of the Riemannian center of mass. 

Finally, there is a third way, in addition to \eqref{cost1} and \eqref{cost2}, of expressing the quantization problem: it is also equivalent to the approximation of the measure $\mu$ by the closest discrete measure $\hat\mu_n$ supported by $n$ points, with respect to the Wasserstein distance of order $p$
\begin{equation}\label{cost3}
W_p(\mu,\hat\mu_n)=\inf_{P}\int d(u,v)^pdP(u,v).
\end{equation}
Here the infimum is taken over all measures $P$ on $M\times M$ with marginals $\mu$ and $\hat\mu_n$. One can construct an optimal discrete approximation $\hat\mu_n$ (i.e. a minimizer of \eqref{cost3}) from an optimal $n$-tuple $(a_1,\hdots,a_n)$ (i.e. a minimizer of \eqref{cost2}), and vice-versa, using
\begin{equation}\label{mun}
\hat\mu_n = \sum_{i=1}^n \mu\left(C_i(\Gamma)\right)\delta_{a_i},
\end{equation}
and then we have
\begin{equation*}
F_{n,p}(a_1,\hdots,a_n) = W_p(\mu,\hat\mu_n).
\end{equation*}
This is well known for the vector case \cite{graf2007} and applies verbatim to measures on manifolds. In the sequel, we will focus on the second formulation \eqref{cost2} of the quantization problem.

The first question that arises is the existence of a minimizer of \eqref{cost2}. Since we have assumed that $\mu$ has compact support, this existence is easily obtained.
\begin{proposition}
Let $M$ be a complete Riemannian manifold and $\mu$ a probability distribution on $M$ with density and a compact support. Then the distorsion function $F_{n,p}$ is continuous and admits a minimizer.
\end{proposition}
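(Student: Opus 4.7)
The statement has two parts, continuity of $F_{n,p}$ on $M^n$ and existence of a minimizer, which I would handle in this order.

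For continuity, I would apply the dominated convergence theorem. For each fixed $x \in M$, the map $(a_1,\ldots,a_n) \mapsto \min_{1\le i\le n} d(x,a_i)^p$ is continuous as the minimum of finitely many continuous functions. Along any convergent sequence $(a^k) \to a$ in $M^n$, all terms lie in some common compact set, and since $K = \supp\mu$ is also compact, the integrand $\min_i d(x,a_i^k)^p$ is dominated by a finite constant uniformly in $x \in K$ and $k$. Dominated convergence then yields $F_{n,p}(a^k) \to F_{n,p}(a)$.

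For existence, I would argue by direct minimization. Fix any $a_0 \in K$; then $m := \inf_{M^n} F_{n,p}$ satisfies $m \leq F_{n,p}(a_0,\ldots,a_0) \leq (\mathrm{diam}\,K)^p < \infty$. Take a minimizing sequence $(a_1^k,\ldots,a_n^k)$. Since $M$ is complete, the Hopf-Rinow theorem guarantees that closed bounded sets are compact, so a finite diagonal argument extracts a subsequence along which each coordinate $a_i^k$ either converges to some $a_i^* \in M$ or satisfies $d(a_0, a_i^k) \to \infty$. Let $I \subset \{1,\ldots,n\}$ collect the indices of the converging coordinates. If $I = \emptyset$, then $\min_i d(x,a_i^k) \to \infty$ uniformly on $K$, forcing $F_{n,p}(a^k) \to \infty$ and contradicting $F_{n,p}(a^k) \to m$. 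Otherwise, diverging coordinates eventually cannot realize the minimum on $K$, and dominated convergence gives $F_{n,p}(a^k) \to F_{|I|,p}((a_i^*)_{i\in I})$. Completing $(a_i^*)_{i\in I}$ into an $n$-tuple by duplicating any coordinate can only decrease the pointwise minimum, hence the distortion, so the completed $n$-tuple attains a value $\leq m$ and is therefore a minimizer.

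The main obstacle is the non-compactness of $M^n$: a priori some coordinates of a minimizing sequence can escape to infinity, so one cannot appeal to Weierstrass directly. The resolution is that such diverging coordinates contribute nothing to the limiting distortion, so they can be harmlessly replaced by duplicates of the converging coordinates to produce a genuine $n$-tuple minimizer; this also explains why the infimum is attained even though the optimal quantizer need not have $n$ distinct points when $|\supp\mu| < n$.
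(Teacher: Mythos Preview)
Your proof is correct. The continuity argument is essentially the paper's: pointwise continuity of the integrand plus a uniform bound on a neighbourhood (or along a convergent sequence), then dominated convergence.

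For existence, however, you take a genuinely different route. The paper argues by an \emph{a priori localization}: since $K\subset B(a,R)$, any coordinate $a_i$ with $d(a,a_i)>2R$ satisfies $d(x,a_i)>R$ for every $x\in K$, so replacing that coordinate by $a$ strictly improves the distortion. Hence the infimum over $M^n$ equals the infimum over the compact set $\bar B(a,2R)^n$, and Weierstrass applies directly. Your approach instead runs the direct method: extract a minimizing sequence, separate converging from diverging coordinates via Hopf--Rinow, show the diverging ones eventually do not realize the minimum on $K$, and finally pad the limiting tuple with duplicates. Both arguments are valid; the paper's is shorter and avoids any case analysis, while yours is the standard template that would still work in settings where a clean improvement-by-replacement trick is unavailable (e.g.\ when no convenient centre $a$ is at hand, or under weaker support assumptions).
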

\begin{proof}
Just as in the vector case \cite[Proposition 1.1]{pages2015}, for any $x\in M$, the function $M^{n}\rightarrow \R_+$, $\alpha=(a_1,\hdots,a_n)\mapsto \min_{1\leq i \leq n} d(x,a_i)$ is 1-Lipschitz for the distance $d'(\alpha,\beta):=\max_{1\leq i\leq n} d(a_i,b_i)$, where $\beta=(b_1,\hdots,b_n)$. Therefore it is continuous, and so is its $p^{th}$ power. Since $K=\supp\mu$ is compact, for all $\alpha\in M^n$ and all $\beta$ in a neighborhood $B(a_1,r_1)\times\hdots\times B(a_n,r_n)$ of $\alpha$, we have
\begin{equation*}
\forall x\in K,\quad \min_{1\leq i\leq n}d(x,b_i)^p\leq \min_{1\leq i\leq n}\left(\sup_{y\in K}d(y,a_i) + r_i \right)^p<\infty.
\end{equation*}
So by dominated convergence, $F_{n,p}$ is continuous. Recall that as a compact set, $K$ is contained in a geodesic ball $B(a,R)$. If $\alpha=(a_1,\hdots,a_n)\in M^n$ is such that $d(a,a_i)>2R$ for at least one $a_i$, then for all $x\in K$, $d(x,a_i)\geq d(a,a_i)-R>R$, and so the same $n$-tuple where $a$ replaces $a_i$ is a better candidate to minimize $F_{n,p}$. We can therefore limit our search to $\bar B(a,2R)$, which is a closed and bounded subset of the complete manifold $M$, and thus compact. The continuous function $F_{n,p}$ reaches a minimum on this compact, which is an absolute minimum.
\end{proof}
The elements of a minimizer $\alpha=(a_1,\hdots,a_n)$ are called \emph{optimal $n$-centers} of $\mu$. The minimizer $\alpha$ is in general not unique, first of all because any permutation of $\alpha$ is still a minimizer, and secondly because any symmetry of $\mu$, if it exists, will transform $\alpha$ into another minimizer of $F_{n,p}$. For example, any rotation of the optimal $n$-centers of the uniform distribution on the sphere conserves optimality. 

The second question that comes naturally is: how does the error one makes by approximating $\mu$ by $\hat\mu_n$ (as given by \eqref{mun}) evolve when the number $n$ of points grows ? The \emph{$n$-th quantization error} is defined by
\begin{equation*}
V_{n,p}(\mu) = \inf_{(a_1,\hdots,a_n)\in M^n}F_{n,p}(a_1,\hdots,a_n)=\inf_{(a_1,\hdots,a_n)\in M^n}\int_M \min_{1\leq i\leq n}d(x,a_i)^p\mu(\mathrm dx).
\end{equation*}
In the vector case, Zador's theorem \cite[Theorem 6.2]{graf2007} tells us that it decreases to zero as $n^{-p/d}$, and that the limit of $n^{p/d}V_{n,p}(\mu)$ is proportional to the \emph{$p^{th}$ quantization coefficient}, i.e. the limit (which is also an infimum) when $\mu$ is the uniform distribution on the unit square of $\R^d$
\begin{equation*}
Q_p([0,1]^d) =\lim_{n\geq1} n^{p/d}V_{n,p}\left(\mathcal U([0,1]^d)\right).
\end{equation*}
Moreover, when $\mu$ is absolutely continuous with density $h$, the asymptotic empirical distribution of the optimal $n$-centers is proportional to $h^{d/(d+p)}$. 

In the case of a Riemannian manifold $M$, the moment condition of the flat case generalizes to a condition involving the curvature of $M$. The following term measures the maximal variation of the exponential map at $x\in M$ when restricted to a $(d-1)$-dimensional sphere $S_\rho\subset T_{x}M$ of radius $\rho$
\begin{equation*}
A_{x}(\rho)= \sup_{v\in S_\rho, w\in T_vS_\rho, \|w\|=\rho} \left\| d_v\exp_x(w)\right\|.
\end{equation*}
The following generalization of Zador's theorem to Riemannian quantization was proposed by Iacobelli \cite{iacobelli2016}.
\begin{theorem}[{\cite[Theorem 1.4 and Corollary 1.5]{iacobelli2016}}]
Let $M$ be a complete Riemannian manifold without boundary, and let $\mu = h\,d\text{vol} + \mu_s$ be a probability measure on $M$, where $d\text{vol}$ denotes the Riemannian volume form and $\mu_s$ the singular part of $\mu$. Assume there exist $x_0\in M$ and $\delta>0$ such that
\begin{equation*}
\int_M d(x,x_0)^{p+\delta} d\mu(x) + \int_M A_{x_0}(d(x,x_0)^p) d\mu(x) <\infty.
\end{equation*}
Then
\begin{equation*}
\lim_{n\to\infty} n^{p/d}V_{n,p}(\mu) = Q_r\left([0,1]^d\right) \|h\|_{d/(d+p)},
\end{equation*}
where $\|\cdot\|_{r}$ denotes the $L^r$-norm. In addition, if $\mu_s=0$ and $(a_1,\hdots,a_n)$ are optimal $n$-centers, then
\begin{equation*}
\frac{1}{n}\sum_{i=1}^n\delta_{a_i} \overset{D}{\longrightarrow} \lambda h^{d/(d+p)}\mathrm dx \quad \text{as } n\to\infty,
\end{equation*}
where $\overset{D}{\rightarrow}$ denotes convergence in distribution and $\lambda$ is the appropriate normalizing constant.
\end{theorem}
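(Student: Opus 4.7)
The strategy is the classical Zador scheme—match an upper and a lower bound for $n^{p/d}V_{n,p}(\mu)$—adapted to the Riemannian setting by pulling back $\mu$ locally through the exponential map into flat charts and absorbing the geometric distortion into the curvature functional $A_{x_0}$. I would organize the argument in three main stages.

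\textbf{Reduction and local flattening.} First, the singular part $\mu_s$ and the tails of $\mu$ outside a large geodesic ball $B(x_0,R_n)$ contribute $o(n^{-p/d})$ to $V_{n,p}$ under the moment condition $\int d(x,x_0)^{p+\delta}d\mu<\infty$, by a Pierce-type truncation argument; this reduces the problem to a compactly supported absolutely continuous $h\,d\text{vol}$. Next, I would cover $\supp h$ by a finite collection of small geodesic balls $B(x_k,r_k)$ on which $\exp_{x_k}$ is a $(1+\varepsilon)$-bi-Lipschitz diffeomorphism onto a Euclidean domain of $T_{x_k}M\simeq\R^d$, the metric and volume distortions being controlled by $A_{x_k}$ and by sectional curvature bounds, and tending to $1$ as $r_k\to 0$.

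\textbf{Matching upper and lower bounds.} For the upper bound, allocate $n_k\sim n\int_{B(x_k,r_k)}h^{d/(d+p)}d\text{vol}$ centers to the $k$-th ball, apply the Euclidean Zador theorem to the pullback density in that chart, and sum; the resulting limsup is at most $(1+\varepsilon)^{C}Q_p([0,1]^d)\|h\|_{d/(d+p)}$, and letting $\varepsilon\to 0$ after $n\to\infty$ gives the sharp upper estimate. For the lower bound, given any sequence of optimal $n$-tuples $\alpha_n$, partition $\supp h$ into almost-flat cells $E_j$, let $n_j$ denote the number of centers in a thickening of $E_j$, and apply the Euclidean Graf--Luschgy lower bound cell by cell to obtain
\begin{equation*}
V_{n,p}(\mu)\geq (1-\varepsilon)^{C}\sum_j Q_p([0,1]^d)\,n_j^{-p/d}\int_{E_j}h\,d\text{vol}.
\end{equation*}
Minimizing the right-hand side over $(n_j)$ subject to $\sum_j n_j\leq n$ via H\"older's inequality with exponent $(d+p)/d$ reproduces exactly $Q_p([0,1]^d)\|h\|_{d/(d+p)}n^{-p/d}$, matching the upper bound.

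\textbf{Asymptotic distribution of centers and main obstacle.} By tightness, the empirical measures $\nu_n:=\frac{1}{n}\sum_{i=1}^n\delta_{a_i}$ admit weak subsequential limits $\nu$; plugging a competitor tailored to $\nu$ into $F_{n,p}$ and comparing with the sharp value forces equality in the H\"older step above, which is possible only if $d\nu \propto h^{d/(d+p)}d\text{vol}$, so the limit is unique and full weak convergence follows. The sharpest step is the lower bound: boundary effects between cells—centers outside $E_j$ serving mass inside and vice versa—together with the cumulative curvature distortion over the whole support, must be controlled uniformly. This is exactly where the hypothesis $\int_M A_{x_0}(d(x,x_0)^p)\,d\mu(x)<\infty$ is indispensable, since it dominates the geometric error generated when flat estimates are transferred from local charts to the global distorsion and ensures that the Euclidean constant $Q_p([0,1]^d)$ survives in the limit without being inflated by curvature.
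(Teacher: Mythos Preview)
The paper does not prove this theorem at all: it is quoted verbatim from Iacobelli~\cite{iacobelli2016} (Theorem~1.4 and Corollary~1.5 there) as background for the asymptotics of the quantization error, and no argument whatsoever is supplied in the present work. So there is no ``paper's own proof'' to compare your proposal against.

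That said, your outline is a reasonable high-level sketch of the strategy Iacobelli actually uses---local comparison with the Euclidean Zador theorem via normal coordinates, separate treatment of the singular part and tails under the moment hypothesis, allocation of centers by the $h^{d/(d+p)}$ rule for the upper bound, and a H\"older-based lower bound with equality forcing the limiting distribution of centers. If you want genuine feedback on the argument you should consult \cite{iacobelli2016} directly; the present paper only invokes the result and moves on to the algorithmic contribution (Algorithm~\ref{clrq} and its convergence).
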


In this work, we are interested in finding numerical schemes to compute the optimal $n$-centers $\alpha=(a_1,\hdots,a_n)$ in practice from potentially large sets of data. To do so, we will search for the critical points of the distorsion function.

\section{Competitive Learning Riemannian Quantization}\label{sec:clqm}

\subsection{Differentiability of the distorsion function}

We assume that the only knowledge that we have of the probability measure $\mu$ that we want to approximate is through an online sequence of i.i.d. observations $X_1, X_2, \hdots$ sampled from $\mu$. A classical algorithm used for quadratic ($p=2$) vector quantization is the \emph{Competitive Learning Vector Quantization} algorithm, a stochastic gradient descent method based on the differentiability of the distorsion function $F_{n,2}$. We propose here a natural extension of this method to our setting, i.e. a compactly-supported probability measure on a complete Riemannian manifold. It relies on the differentiability of the distorsion function.
\begin{proposition}
\label{propdiff}
Let $\alpha=(a_1,\hdots,a_n)\in M^n$ be an $n$-tuple of pairwise distinct components and $p>1$. Then $F_{n,p}$ is differentiable and its gradient in $\alpha$ is
\begin{equation*}
\nabla_\alpha F_{n,p} = \left( -p\int_{\mathring{C_i}(\alpha)} \|\overset{\longrightarrow}{a_ix}\|^{p-1}\frac{\overset{\longrightarrow}{a_ix}}{\|\overset{\longrightarrow}{a_ix}\|}\,\mu(\mathrm dx)\right)_{1\leq i\leq n}\in T_\alpha M^n,
\end{equation*}
where $\mathring{C_i}(\alpha)$ is the interior of the $i^{th}$ Voronoi cell of $\alpha$ and $\overset{\longrightarrow}{xy}:=\exp_x^{-1}(y)$ denotes the vector that sends $x$ on $y$ through the exponential map. In particular, the gradient of the quadratic distorsion function is given by
\begin{equation}\label{grad}
\nabla_\alpha F_{n,2} = \left( -2\int_{\mathring{C_i}} \overset{\longrightarrow}{a_ix}\,\mu(\mathrm dx)\right)_{1\leq i\leq n}=-2\left( \mathbb E_\mu \mathbf{1}_{\{X\in \mathring{C_i}\}} \overset{\longrightarrow}{a_iX}\right)_{1\leq i\leq n}.
\end{equation}
\end{proposition}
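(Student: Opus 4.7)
The plan is to compute the gradient of $F_{n,p}$ by differentiating under the integral sign, exploiting the envelope structure of the $\min$. I would start by rewriting
$$F_{n,p}(\alpha) = \int_M \min_{1\leq i \leq n} d(x,a_i)^p \, \mu(\mathrm dx) = \sum_{i=1}^n \int_{\mathring{C}_i(\alpha)} d(x,a_i)^p \, \mu(\mathrm dx),$$
noting that the Voronoi skeleton $\bigcup_{i\neq j}\{x : d(x,a_i)=d(x,a_j)\}$ is a finite union of smooth hypersurfaces (each locally a zero set of the smooth function $d(\cdot,a_i)^2 - d(\cdot,a_j)^2$ on the convex geodesic ball containing $\supp\mu$), hence $\mu$-negligible because $\mu$ has a density.

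The key step is pointwise differentiability of $g(x,\alpha) := \min_i d(x,a_i)^p$. For $\mu$-almost every $x$ there is a unique index $i(x)$ with $x \in \mathring{C}_{i(x)}(\alpha)$; by continuity of distance this index remains the strict minimizer for all $\alpha'$ in a small neighborhood of $\alpha$ in $M^n$. On that neighborhood $g(x,\cdot)$ coincides with $\alpha' \mapsto d(x,a'_{i(x)})^p$, so by Lemma \ref{lemgrad} and the chain rule the only nonzero block of its gradient sits in slot $i(x)$ and equals $-p\,\|\overrightarrow{a_{i(x)} x}\|^{p-2}\,\overrightarrow{a_{i(x)} x}$; the hypothesis $p>1$ guarantees this expression is continuous and vanishes at $x = a_{i(x)}$. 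I would then invoke dominated convergence to exchange gradient and integral: on a closed neighborhood of $\alpha$ in $M^n$, the vector $\overrightarrow{a'_i x}$ has norm uniformly bounded on $\supp\mu$ by some $R$, yielding the integrable dominating bound $pR^{p-1}$. Integrating slot by slot and using the identity $\|\overrightarrow{a_i x}\|^{p-2}\overrightarrow{a_i x} = \|\overrightarrow{a_i x}\|^{p-1}\,\overrightarrow{a_i x}/\|\overrightarrow{a_i x}\|$ gives exactly
$$(\nabla_\alpha F_{n,p})_i = -p \int_{\mathring{C}_i(\alpha)} \|\overrightarrow{a_i x}\|^{p-1} \frac{\overrightarrow{a_i x}}{\|\overrightarrow{a_i x}\|} \, \mu(\mathrm dx),$$
and the case $p=2$ specializes to \eqref{grad}.

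The main obstacle I anticipate is the moving-domain issue: a cell-by-cell application of a Leibniz rule on $\int_{\mathring{C}_i(\alpha)} d(x,a_i)^p \mu(\mathrm dx)$ would seem to produce boundary contributions as $\partial C_i(\alpha)$ shifts with $\alpha$. These terms ought to cancel between adjacent cells since the integrand agrees on both sides of a shared boundary, but the cleaner route adopted above sidesteps them entirely by differentiating the $\min$-form, whose singular set is exactly the $\mu$-negligible Voronoi skeleton. A secondary subtlety is the possible presence of cut points, which could spoil the smoothness of $d(x,a_i)^p$ for some $x \in \supp\mu$; this is ruled out by the convex geodesic ball assumption on $\supp\mu$ inherited from the preceding section.
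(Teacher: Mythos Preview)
Your proposal is correct and follows essentially the same route as the paper: both differentiate the integrand $\min_i d(x,a_i)^p$ pointwise for $x$ off the $\mu$-null Voronoi skeleton, invoke Lemma~\ref{lemgrad} for the gradient of the squared distance, and pass the derivative under the integral by dominated convergence using a compactness bound. Your explicit remark on sidestepping the moving-domain boundary terms by working with the $\min$-form rather than cell by cell is exactly the paper's (implicit) strategy.
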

\begin{remark}\label{rem1}
The first observation we can make is that optimal $n$-centers are Riemannian centers of mass of their Voronoi cells, as characterized by \eqref{mean2}. Hence, the term $n$-centers is as appropriate in the Riemannian setting as it is in the vector case. More generally, for any value of $p$, each $a_i$, $i=1,\hdots,n$, is the $p$-mean of its Voronoi cell, i.e. the minimizer of
\begin{equation*}
a\mapsto \int_{\mathring{C_i}(\alpha)}d(x,a)^p\mu(\mathrm dx).
\end{equation*}
Therefore, the optimal $n$-centers are always contained in the compact support of $\mu$. 
\end{remark}
\begin{remark}\label{rem2}
The second observation is that the opposite direction of the gradient is, on average, given by the vectors inside the expectation. Competitive learning quantization consists in following this direction at each step $k$, that is, updating only the center $a_i$ corresponding to the Voronoi cell of the new observation $X_k$, in the direction of that new observation. In the Riemannian setting, instead of moving along straight lines, we simply follow geodesics using the exponential map. 
\end{remark}
\begin{proof}
Let $\alpha=(a_1,\hdots,a_n)\in M^n$ be an $n$-tuple of pairwise distinct components, $w=(w_1,\hdots,w_n)\in T_\alpha M^n$ a tangent vector to $\alpha$, and let $(-\epsilon,\epsilon)\ni t\mapsto \alpha(t)=(a_1(t),\hdots,a_n(t))$ be a variation of $\alpha$ in the direction given by $w$, i.e. such that $a_i(0)=a_i$ et $\dot{a_i}(0)=w_i$ for all $i=1,\hdots,n$. The functional $t\mapsto \min_{1\leq i\leq n}d(x,a_i(t))^p$ is differentiable for all $x\not\in \cup_{1\leq i\leq n}\partial C_i(\alpha(t))$, that is $\mu$-almost everywhere since $\mu(\partial C_i(\alpha(t)))=0$ for all $i$ (this is shown in the Appendix, Lemma \ref{lemmass}). Its derivative in $t=0$ is given by
\begin{align*}
\left.\frac{d}{dt}\right|_{t=0} \min_{1\leq i\leq n}d(x,a_i(t))^p &= \sum_{i=1}^n \mathbf{1}_{\{x\in \mathring{C_i}\}} \left.\frac{d}{dt}\right|_{t=0} d(x,a_i(t))^p\\
&=-\sum_{i=1}^n \mathbf{1}_{\{x\in \mathring{C_i}\}}  \frac{p}{2}\,\left[d(x,a_i(0))^2\right]^{p/2-1} 2\left\langle \overrightarrow{a_i(0)x}, \dot{a_i}(0)\right\rangle\\
&=-\sum_{i=1}^n \mathbf{1}_{\{x\in \mathring{C_i}\}}  p\|\overrightarrow{a_ix}\|^{p-2}\left\langle \overrightarrow{a_ix}, w_i\right\rangle.
\end{align*}
To go from the second to the third line, we have used the well known property that for any given $x$, the gradient of the function $f:a\mapsto d(x,a)^2$ is given by $\nabla_af  = -2 \exp_a^{-1}x= -2 \overrightarrow{ax}$ (see Lemma \ref{lemgrad} in the Appendix). We obtain by Cauchy-Schwarz, since $x\in B(a,R)$ and $a_i\in B(a,2R)$ (recall that any $n$-tuple containing a coordinate outside of $B(a,2R)$ is a worse candidate than the same $n$-tuple where $a$ replaces $a_i$),
\begin{equation*}
\left|\left.\frac{d}{dt}\right|_{t=0} \min_{1\leq i\leq n}d(x,a_i(t))^p \right| \leq p\sum_{i=1}^n \|\overrightarrow{a_ix}\|^{p-1}\|w_i\|\leq p(3R)^{p-1}\sum_{i=1}^n\|w_i\|.
\end{equation*}
Therefore, by dominated convergence, $t\mapsto \psi_{n,p}(\alpha(t))$ is differentiable and its differential in $\alpha$ at $w$ is given by
\begin{align*}
T_\alpha F_{n,p}(w) = \sum_{i=1}^n \left\langle -2\int \mathbf{1}_{\{x\in \mathring{C_i}\}} \|\overset{\longrightarrow}{a_ix}\|^{p-1}\frac{\overset{\longrightarrow}{a_ix}}{\|\overset{\longrightarrow}{a_ix}\|}\,\mu(\mathrm dx), w_i\right\rangle= \langle\langle \nabla_\alpha F_{n,p},w\rangle\rangle,
\end{align*}
where $\langle\langle\cdot,\cdot\rangle\rangle$ denotes the $L^2$ metric on $M^n$, which gives the desired result for the gradient.
\end{proof}

\subsection{The algorithm}

Assume that we have access to $N$ independant and identically distributed observations $X_1,\hdots,X_N$. We choose a sequence of positive steps $(\gamma_k)_{k\geq 1}\subset (0,1)$ verifying the usual conditions
\begin{equation}\label{steps}
\sum_{k\geq 1}\gamma_k =+\infty, \quad \sum_{k\geq 1}\gamma_k^2 <+\infty. 
\end{equation}
We propose the following algorithm.
\begin{algorithm}[Competitive Learning Riemannian Quantization]\label{clrq}
\leavevmode\par \noindent
Initialization: $\alpha(0)=(a_1(0),\hdots,a_n(0))$.\\
For $k=0,\hdots,N-1$,
\begin{enumerate}
\item find $i=\underset{j}{\text{argmin}}\, d(X_{k+1},a_j(k))$,
\item set $\alpha(k+1)=(a_1(k+1),\hdots,a_n(k+1)$ where 
\begin{align*}
a_i(k+1) &= \exp_{a_i(k)}\left(\gamma_{k+1} \overrightarrow{a_i(k)X_k}\right),\\
a_j(k+1) &= a_j(k) \quad \forall j\neq i.
\end{align*}
\end{enumerate}
\end{algorithm}
The steps are chosen to be in $(0,1)$ so that at each iteration, the center that is updated stays in the same Voronoi cell, guaranteeing that the centers stay pairwise distinct (if initially pairwise distinct).

Now we show the convergence of Algorithm \ref{clrq}, using a theorem from Bonnabel \cite{bonnabel2013}.
\begin{proposition}
Assume that the injectivity radius of $M$ is uniformly bounded from below by some $I>0$, and let $(\alpha(k))_{k\geq 0}$ be computed using Algorithm \ref{clrq} and samples from a compactly supported distribution $\mu$. Then $F_{n,2}(\alpha(k))$ converges a.s. and $\nabla_{\alpha(k)}F_{n,2} \to 0$ as $k\to\infty$ a.s.
\end{proposition}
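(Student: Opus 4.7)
The plan is to view Algorithm \ref{clrq} as a Riemannian stochastic gradient descent on the product manifold $M^n$ for the cost $F_{n,2}$, and then invoke the convergence theorem of Bonnabel \cite{bonnabel2013}. Two ingredients have to be checked: that the tangent vector driving each update is an unbiased estimator (up to a positive factor) of $-\nabla F_{n,2}$, and that the iterates together with the stochastic gradient satisfy Bonnabel's boundedness and regularity hypotheses.

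\textbf{Identifying the stochastic gradient.} For $\alpha=(a_1,\hdots,a_n)\in M^n$ with pairwise distinct coordinates and $x\in M$, define
\[
H(\alpha,x)=\bigl(\mathbf{1}_{\{x\in \mathring{C_j}(\alpha)\}}\,\overrightarrow{a_j x}\bigr)_{1\leq j\leq n}\in T_\alpha M^n.
\]
Almost surely $X_{k+1}$ falls in the interior of exactly one Voronoi cell $\mathring{C_i}(\alpha(k))$, so $H(\alpha(k),X_{k+1})$ has its only nonzero component on the coordinate $i=\argmin_j d(X_{k+1},a_j(k))$; applying the coordinatewise exponential of $M^n$, one step of Algorithm \ref{clrq} becomes exactly $\alpha(k+1)=\exp_{\alpha(k)}\!\left(\gamma_{k+1}\,H(\alpha(k),X_{k+1})\right)$. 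Formula \eqref{grad} of Proposition \ref{propdiff} then yields $\E_\mu[H(\alpha,X)]=-\tfrac{1}{2}\,\nabla_\alpha F_{n,2}$, so the update is a genuine stochastic gradient step for $\tfrac{1}{2}F_{n,2}$.

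\textbf{Confinement in a compact, Voronoi-stable set.} Let $\bar B(a,R)$ be the convex geodesic ball containing $K=\supp\mu$ used in the existence proof, and initialize $\alpha(0)\in \bar B(a,R)^n$ with pairwise distinct entries. A direct induction shows $\alpha(k)\in \bar B(a,R)^n$ for every $k$: the updated coordinate $a_i(k+1)$ lies on the minimizing geodesic segment joining $a_i(k)\in \bar B(a,R)$ to $X_{k+1}\in K\subset \bar B(a,R)$ at fraction $\gamma_{k+1}\in(0,1)$, which stays inside by convexity. The same restriction $\gamma_k<1$ keeps the moving center strictly inside its current Voronoi cell, hence preserves pairwise distinctness. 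In particular, the trajectory lies in a fixed compact subset of the open set of pairwise distinct configurations, $\|H(\alpha(k),X_{k+1})\|\leq 2R$, and $\|\nabla_{\alpha(k)}F_{n,2}\|$ is uniformly bounded.

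\textbf{Invoking Bonnabel's theorem and the main subtlety.} Gathering the previous steps, one sees that $F_{n,2}$ is differentiable on the open set of pairwise distinct configurations (Proposition \ref{propdiff}), the step sizes satisfy \eqref{steps}, the iterates stay in a compact subset of that open set, the stochastic gradient is unbiased and uniformly bounded, and the injectivity radius of $M$ is uniformly bounded below by $I>0$ by assumption. These are exactly the hypotheses under which Bonnabel's theorem \cite{bonnabel2013} gives the a.s.\ convergence of $F_{n,2}(\alpha(k))$ and $\nabla_{\alpha(k)}F_{n,2}\to 0$ a.s. The most delicate point is the regularity required by Bonnabel's argument, which combines an approximate descent inequality — obtained from a second-order expansion of $F_{n,2}$ along the exponential retraction — with a Robbins-Siegmund supermartingale lemma; controlling the Hessian-type remainder in that expansion is where the uniform lower bound on the injectivity radius is used, and where one must argue that the piecewise-smooth nature of $F_{n,2}$ induced by the Voronoi partition causes no trouble on the compact subset to which the iterates are confined.
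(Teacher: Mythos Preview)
Your proposal is correct and follows the same overall route as the paper: cast Algorithm~\ref{clrq} as Riemannian stochastic gradient descent with per-sample gradient $H(\alpha,x)$, verify unbiasedness via \eqref{grad}, establish that the iterates remain in a compact set with a uniformly bounded stochastic gradient, and then appeal to Bonnabel's theorem \cite{bonnabel2013}.

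One point where your argument is actually cleaner than the paper's own proof deserves mention. The paper argues confinement by invoking Remark~\ref{rem1} to say that ``the $n$-centers $\alpha(k)$ are each the barycenter of their Voronoi cell and therefore always stay in the same compact ball $B(a,R)$''. But Remark~\ref{rem1} concerns \emph{optimal} $n$-centers, not the intermediate iterates $\alpha(k)$, so that reasoning does not quite apply. Your inductive argument via the convexity of $\bar B(a,R)$ --- the updated coordinate lies on a geodesic segment between two points of the ball, hence stays inside --- is the right way to prove confinement, and it is good that you flagged the need to initialize $\alpha(0)\in\bar B(a,R)^n$. You also rightly identify the regularity of $F_{n,2}$ (needed for the second-order Taylor step in Bonnabel's proof) as the most delicate unstated hypothesis; the paper's proof simply asserts that ``all the assumptions of \cite[Theorem~1]{bonnabel2013} are verified'' without addressing this, so your discussion goes a bit further than the original.
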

\begin{proof}
The proof relies on \cite[Theorem 1]{bonnabel2013}, and therefore we adopt the same notations. The cost function $F_{n,2}$ and its gradient can be respectively expressed as expectations of the functions
\begin{gather*}
Q(x,\alpha) = \min_{1\leq i\leq n}d(x,a_i)^2,\\
H(x,\alpha) = \nabla_\alpha Q(x,\alpha)=\left(-2\,\mathbf 1_{\mathring C_i}(x) \overrightarrow{a_ix}\right)_{1\leq i\leq n}.
\end{gather*}
This is once again due to Lemma \ref{lemgrad}, given in the Appendix. As stated in Remark \ref{rem1}, the $n$-centers $\alpha(k)=(a_1(k),\hdots,a_n(k))$ are each the barycenter of their Voronoi cell and therefore always stay in the same compact ball $B(a,R)$ as the data $X_1,X_2,\hdots$. Since $\|\mathbf 1_{\mathring C_i(x)}\overrightarrow{yx}\|\leq 2R$ for all $y,x\in K\subset B(a,R)$ and $i=1,\hdots,n$, the gradient $H$ is bounded on $K\times K$. That is, all the assumptions of  \cite[Theorem 1]{bonnabel2013} are verified and Algorithm \ref{clrq} converges.
\end{proof}

\section{Examples}\label{sec:ex}

Now let us show some toy examples on manifolds of constant sectional curvature: the circle, the $2$-sphere and the hyperbolic plane. To start, we show optimal discrete approximations of the uniform and the von Mises distributions on the circle $S^1$ (Figure \ref{fig:S1}). The top row shows the initialization (left) and result (middle) of Competitive Learning Riemannian Quantization (CLRQ) of size $n=6$ performed on $N=1000$ observations sampled from the uniform distribution, while the bottom row shows the initialization and result of CLRQ of size $n=5$ performed on $N=1000$ observations sampled from the von Mises distribution centered in $0$ with concentration parameter $K=5$. The centers are initialized uniformly on $[0,2\pi]$ and $[-\pi,\pi]$ respectively, $S^1$ being identified with $[0,2\pi)$. In order to reduce dependency on the initialization, each step $k$ is repeated a certain number $m$ of times. In other words, the same step size is used several times. In the uniform case, $m=10$ is sufficient to obtain a visually satisfying result. For the von Mises distribution, we choose $m=50$.

On the right-hand side of Figure \ref{fig:S1}, we plot the evolution of the Wasserstein distance between the initial distribution and its quantized version. (Recall that the quantization cost function \eqref{cost3} involves the $L^2$-Wasserstein distance.) As shown in \cite{rabin2011}, the computation of the $L^p$-Wasserstein distance between two measures on the circle can be reduced to the same operation in the unit interval $[0,1]$ by "cutting" the circle at a certain point $s\in S^1$, i.e. by identifying it with a unit length fundamental domain for $\R/\mathbb Z$. However, when the two measures are not evenly distributed, the optimal cut is easier to find in the $L^1$ case, therefore we choose to merely compute the weaker $L^1$-Wasserstein (or Kantorovich-Rubinstein) distance using the algorithm introduced in \cite{cabrelli1995}. We plot the distance between the measure $\mu$ and its approximation at each step $k$ of Algorithm \ref{clrq}
\begin{equation*}
\nu(k) = \sum_{i=1}^n \mu(C_i(k)) \delta_{a_i(k)},
\end{equation*}
where $(a_1(k),\hdots,a_n(k))$ are the $n$-centers at step $k$, $C_1(k),\hdots,C_n(k)$ are the corresponding Voronoi cells, and $\delta_x$ is the Dirac distribution at $x\in M$. Assuming that $N\gg n$, we can approximate $\mu$ by the empirical measure of the observations $x_1,\hdots, x_N$
\begin{equation*}
\hat\mu = \frac{1}{N} \sum_{i=1}^N \delta_{x_k}.
\end{equation*}
In order to compare two discrete measures with the same number of points, we then identify $\hat\mu$ and $\nu(k)$ with the measures obtained on the reunion of their supports by completing with zero masses. For both the uniform and the von Mises examples, the Wasserstein distance decreases as expected.

\begin{figure}
\includegraphics[width=0.26\linewidth,height=0.22\linewidth]{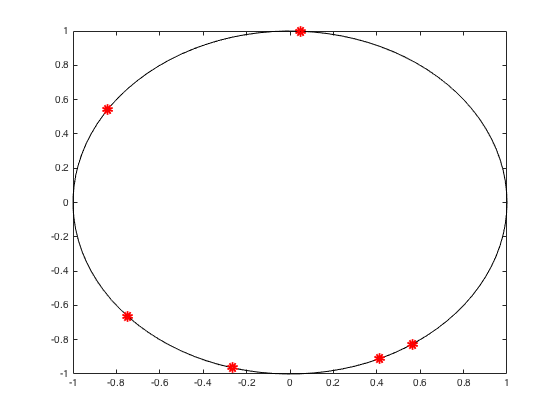}
\includegraphics[width=0.26\linewidth,height=0.22\linewidth]{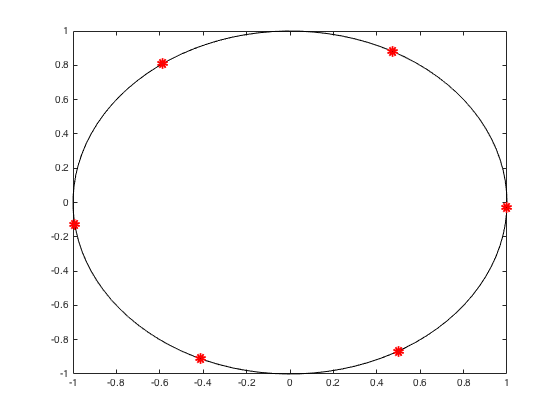}
\includegraphics[width=0.29\linewidth]{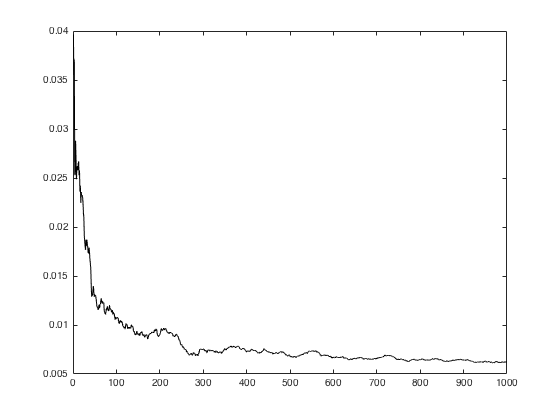}\\
\includegraphics[width=0.26\linewidth,height=0.22\linewidth]{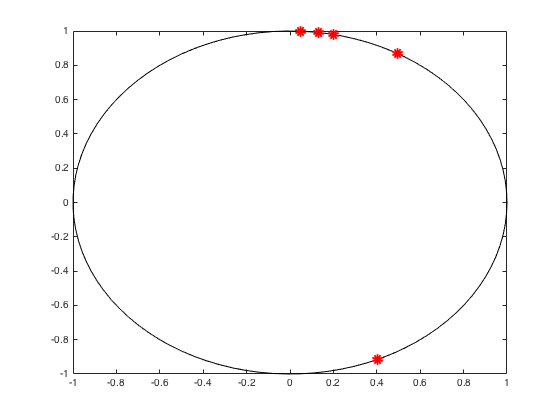}
\includegraphics[width=0.26\linewidth,height=0.22\linewidth]{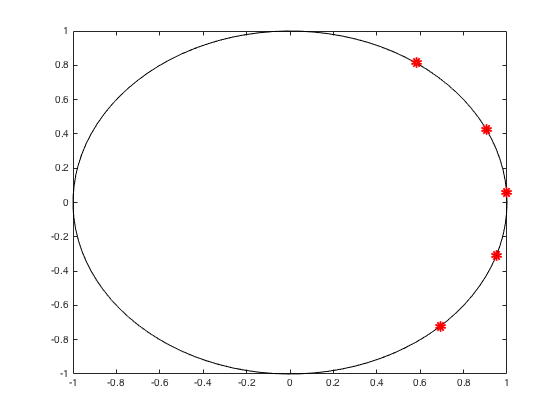}
\includegraphics[width=0.29\linewidth]{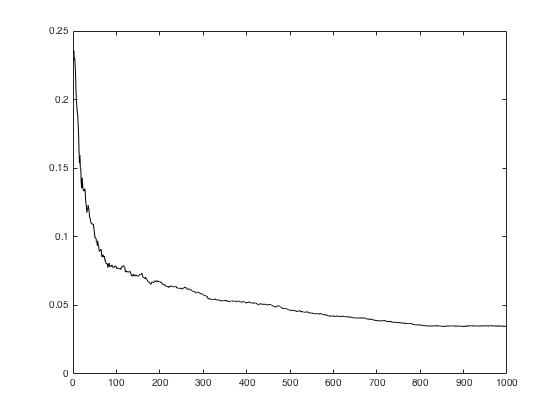}
\caption{Quantization of the uniform (top) and von Mises (bottom) distributions: initial positions of the $n$-centers (left), final positions of the $n$-centers (middle) and evolution of the $L^1$-Wasserstein distance between the initial distribution and its quantized version (right).}
\label{fig:S1}
\end{figure}

\begin{figure}
\subfloat{\includegraphics[width=0.275\textwidth,height=0.25\textwidth]{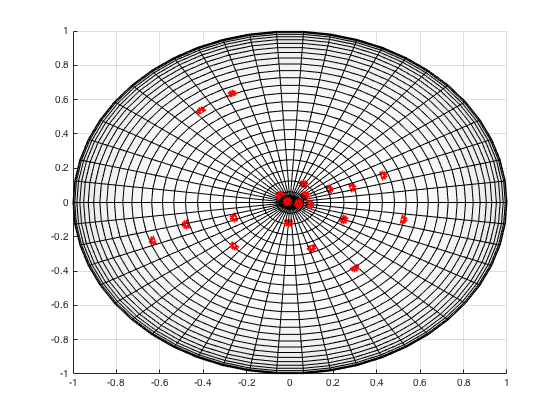}}
\subfloat{\includegraphics[width=0.275\textwidth,height=0.25\textwidth]{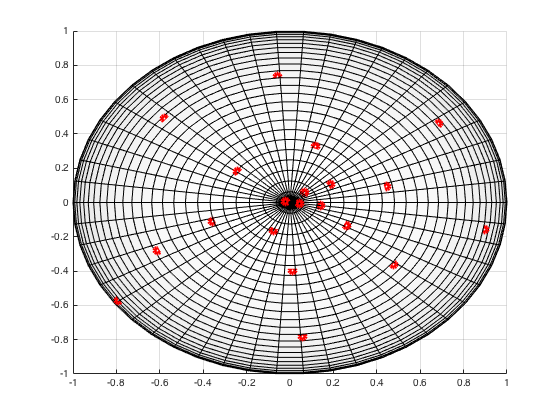}}
\subfloat{\includegraphics[width=0.275\textwidth,height=0.25\textwidth]{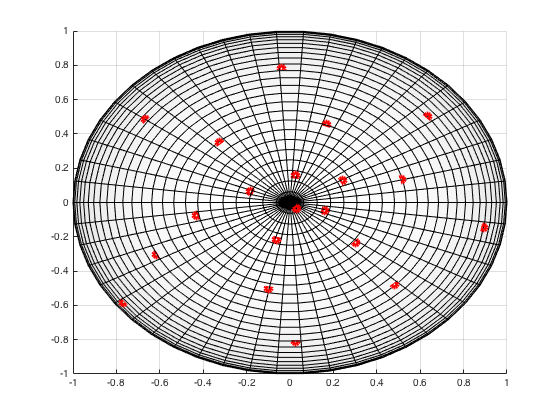}}\\
\caption{Quantization of the von Mises distribution on the $2$-sphere after 1 iteration (left), $50$ iterations (middle) and $200$ iterations (right).}
\label{fig:S2}
\end{figure}

\begin{figure}
\subfloat{\includegraphics[width=0.33\textwidth,height=0.3\textwidth]{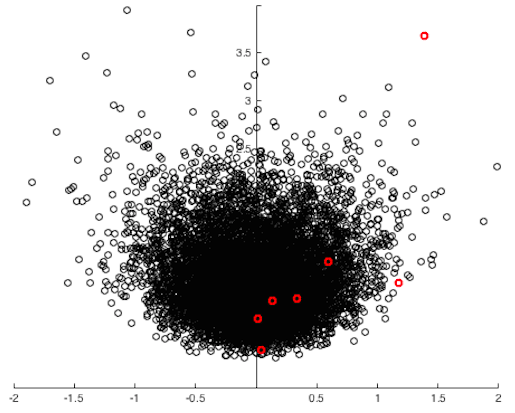}}
\subfloat{\includegraphics[width=0.33\textwidth,height=0.3\textwidth]{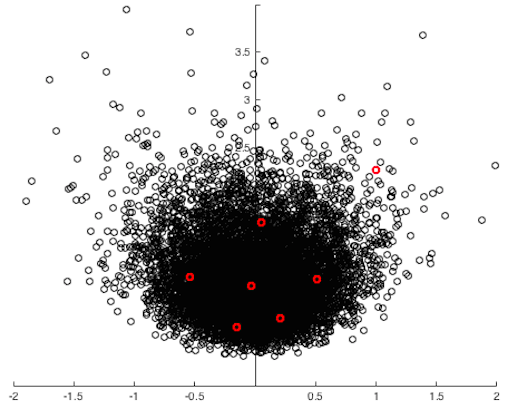}}
\subfloat{\includegraphics[width=0.33\textwidth,height=0.3\textwidth]{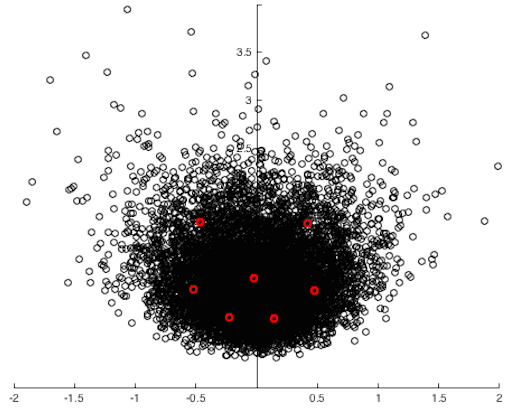}}
\caption{Quantization of the Gaussian distribution on the hyperbolic half-plane after 1, 20 and 100 iterations.}
\label{fig:H2}
\end{figure}

Next, we show examples on the sphere. Figure \ref{fig:S2} displays three steps of the CLRQ algorithm applied to the von Mises distribution with concentration parameter $K=5$, namely steps $1$, $50$ and $100$, where each step contains $m=50$ iterations at the same step size. 
Finally, to show an example in the negative curvature setting, we use the work of Said et al. \cite{said2017} regarding the definition and generation of Gaussian distributions on the space of SPD matrices to generate Gaussian samples on the hyperbolic half-plane. Recall that the hyperbolic half-plane is one of the models of $2$-dimensional hyperbolic geometry, and can be identified with the complex upper-half plane $\mathbb H^2 = \{z=x+iy, (x,y)\in \mathbb R\times \mathbb R_+^*\}$ where the length of an infinitesimal element $dx+idy$ at point $x+iy$ is measured by 
\begin{equation*}
ds^2=\frac{dx^2+dy^2}{y^2}.
\end{equation*}
The special linear group $\text{SL}_2$ acts on $\mathbb H^2$ from the left through the Moebius transformation: $\text{SO}_2\times\mathbb H^2\rightarrow \mathbb H^2$, defined by
\begin{equation*}
\left(\begin{matrix} a&b\\c&d\end{matrix}\right) \cdot z= \frac{az+b}{cz+d}.
\end{equation*}
This action is transitive since for all $(x,y)\in\mathbb R\times\mathbb R_+^*$,
\begin{equation*}
\left(\begin{matrix}y^{1/2} & xy^{-1/2}\\ 0 & y^{-1/2}\end{matrix}\right)\cdot i= x+iy.
\end{equation*}
Noticing that $\text{SO}_2$ is the stabilizer of $i$, we can identify $\mathbb H^2$ with $\text{SL}_2/\text{SO}_2$, which is also homeomorphic to the space of $2\times 2$ SPD matrices of determinant $1$ \cite{jost2008}. The space of SPD matrices  of determinant $1$ is therefore homeomorphic to $\mathbb H^2$, and the homeomorphism is given by $\Phi:P \mapsto L\cdot i$, where $L$ is the upper-triangular matrix of the Cholesky decomposition of the SPD matrix $P=L^TL$. To generate a Gaussian sample in $\mathbb H^2$, we generate a Gaussian sample of SPD matrices using \cite[Proposition 6]{said2017}, renormalize and transport them to the hyperbolic half-plane using $\Phi$. Figure \ref{fig:H2} shows steps $1$, $20$ and $100$ of the CLRQ algorithm applied to observations sampled from the Gaussian distribution centered in $i$ and with standard deviation $\sigma=0.5$. Each step contains $m=100$ iterations.

\section{Application to air traffic complexity analysis}\label{sec:atm}

\subsection{Context}

This work was motivated by complexity analysis in air traffic management (ATM). ATM deals with the means of organizing the aircraft trajectories in a given airspace so as to ensure both safety and efficiency. One of the most important part of ATM is the air traffic control (ATC) that acts on flying or taxiing aircrafts in such a way that separation norms are satisfied at all time. Nowadays, most of the ATC is surveillance based, relying primarily on  information coming from the Radars to give instructions to planes. Even in country-sized airspaces the amount of traffic to be controlled is far beyond the limits of a single operator and the area under the responsibility of an air traffic controller (ATCO) has to be kept tractable. As a consequence, the whole airspace must be divided into elementary cells, known as control sectors, that are alloted to a team of ATCOs. A major concern arising in the design of such sectors is to ensure that the ATCOs workload is equally balanced over the airspace. Highly complex areas, involving many flows crossings and altitude changes, like those encountered close to the largest airports, must be kept small, while low complexity ones, with either a small aircraft density or a simple airspace structure may be made large. Finding a complexity indicator that can be computed using only airspace and traffic information and that closely mimics the perceived workload is a difficult, still unsolved problem \cite{PRAN2011,COOK2015149}. One of the most widely used indicators is the dynamic density \cite{dyndens1998}, that combines influential factors , like number of maneuvering aircrafts, number of level changes and so on, to output a single positive real value representing the complexity level. Although quite pertinent from an operational point of view, the dynamic density is a tool requiring a lot of tuning, involving experiments with a panel of ATCOs and that cannot be adapted to different airspaces without having to re-tune from scratch. For the same reason, it is quite difficult to use it for assessing the performance of new concepts, since in such a case there is no reference situation or existing operational context that may be used to perform the tuning phase. 
On the other hand, purely geometrical indicators have been introduced \cite{lee2007,del2010}, that are free of the dynamic density limitations. While perfectly suited to infer an intrinsic notion of complexity, they do not model all the aspects of the workload, as perceived by a human operator. 
The approach taken in the present work may be viewed as a mix between the two previous ones: it relies on an intrinsic description of traffic, but does not produce a complexity value: instead, a summary of the situation is issued, that serves as an input to a classification or machine learning algorithm. Even if this last phase looks very similar to a kind of dynamic density evaluation, it is far less complex:
\begin{itemize}
\item The traffic summary itself requires no tuning, while the influential factors taken into account in the dynamic density have weights that must be adjusted during the learning phase. 
\item Complexity classes are computed instead of complexity levels: a clustering algorithm will first segment the traffic dataset into homogeneous subsets, then a workload value will be associated by experts to each of them. This process is lightweight, since only the representative in each class has to be expertized.
\item Adaptation to different airspaces is an easy task for the same reason: experts will evaluate only the representative situation in each class. 
\end{itemize}
The first step is to model the spatial distribution of the aircraft velocities as a Gaussian law. Then, the covariance function is used as an indicator of traffic complexity. This assumption makes sense from an operational point of view as it represents the local organization, that is the most influential factor on ATCCOs workload. 

\subsection{Estimating the covariance matrices}

Although we will in practice consider time-windows, we start by considering a given airspace at a fixed time $t$ containing $N$ aircrafts either flying or taxiing. We respectively denote by $z_i$ and $v_i$ the position and speed of the aircraft $i$, $1\leq i\leq N$, at time $t$. Since the altitude of an aircraft plays a special role and does not appear on controllers displays, we choose to disregard it and adopt a planar representation through a stereographic projection. 
An underlying Gaussian field model is assumed for the relation between the velocity and the position, whose variance function will be interpreted as a pointwise measure of the traffic complexity.

A non parametric approach of type Nadaraya-Watson \cite{nadaraya1964,watson1964} was taken to estimated the mean  and variance functions at point $z$:
\begin{equation}\label{cov}
\begin{aligned}
\hat m(z) &= \frac{\sum_{i=1}^NV_iK_h(z-Z_i)}{\sum_{j=1}^N K_h(z-Z_j)},\\
\hat\Sigma(z) &= \frac{\sum_{i=1}^N (V_i-\hat m(z))(V_i-\hat m(z))^T K_h(z-Z_i)}{\sum_{j=1}^N K_h(z-Z_j)}.
\end{aligned}
\end{equation}
The weights are given by a kernel function $K$, i.e. a positive, symmetric function of unit area, scaled by a factor $h>0$: $K_h(x)=h^{-1}K(x/h)$. Since most kernels have compact support, the estimations are based in practice on a number of observations that is very inferior to the size $N$ of the sample. The estimator $\hat\Sigma$ has been studied in \cite{yin2010} where it is shown to be asymptotically normal. Evaluating it at positions $z_1,\hdots,z_N$ yields a series of symmetric, positive definite matrices $\hat\Sigma(z_1),\hdots,\hat\Sigma(z_N)$ with empirical distribution
\begin{equation*}
\hat\mu=\frac{1}{N}\sum_{i=1}^N \delta_{\hat\Sigma(z_i)},
\end{equation*}
where $\delta_\Sigma$ denotes the Dirac mass at $\Sigma$. In order to obtain a summary of the traffic complexity, we propose to quantize $\hat\mu$ using the CLRQ algorithm on the space of SPD matrices.

\subsection{The geometry of SPD matrices}

For the sake of completeness, let us briefly recall the most commonly used Riemannian structure \cite{pennec2006riemannian} on the space $\mathcal P_n$ of symmetric, positive definite matrices. Note that in this application, we are simply interested in the case $n=2$. The Euclidean dot product on the space $\mathcal M_n$ of square matrices of size $n$ is given by the Frobenius inner product $\Sigma_1 \cdot \Sigma_2 = \text{tr}(\Sigma_1^\T\Sigma_2)$, where $\text{tr}$ denotes the trace. As an open subset of the vector space $\mathcal M_n$, $\mathcal P_n$ is a manifold where the tangent vectors are symmetric matrices. It can be equipped with a Riemannian metric invariant with respect to the action of the general linear group $GL_n\times\mathcal P_n\rightarrow \mathcal P_n$, $(A,\Sigma)\mapsto A^\T\Sigma A$. At the identity, this metric is given by the usual Euclidean scalar product $\langle W_1,W_2\rangle_{\text{Id}} = W_1\cdot W_2 = \text{tr}(W_1^\T W_2)$, and at $\Sigma$, we ask that the value of the scalar product does not change when the tangent vectors are transported back to the identity via the action of $A=\Sigma^{-1/2}$, i.e.
\begin{equation}
\langle W_1,W_2\rangle_{\Sigma} =\langle \Sigma^{-1/2}W_1\Sigma^{-1/2},\Sigma^{-1/2}W_2\Sigma^{-1/2}\rangle_{\text{Id}}= \text{tr}(\Sigma^{-1/2}W_1\Sigma^{-1}W_2\Sigma^{-1/2}).
\label{metric}
\end{equation}
The associated geodesic distance is given by
\begin{equation*}
d(\Sigma_1,\Sigma_2)=\sqrt{\sum_{i=1}^N \log^2\left(\lambda_i(\Sigma_1^{-1/2}\Sigma_2\Sigma_1^{-1/2})\right)},
\end{equation*}
where we use the notation $\lambda_i(\Sigma)$, $i=1,\hdots,n$, to denote the eigenvalues of $\Sigma$. Recall that in order to update the centers of the discrete approximation in the CLRQ algorithm, we need the exponential map, i.e. a mapping that associates to each point $\Sigma$ and tangent vector $W$ at $\Sigma$ the end point of the geodesic starting from $\Sigma$ at speed $W$. In the case of metric \eqref{metric}, it is given by
\begin{equation*}
\text{exp}_\Sigma(W)=\Sigma^{1/2}\text{exp}\left(\Sigma^{-1/2}W\Sigma^{-1/2}\right)\Sigma^{1/2},
\end{equation*}
where the $\text{exp}$ on the right-hand side denotes the matrix exponential. Finally, we also need the inverse mapping, i.e. the logarithm map
\begin{equation*}
\text{log}_{\Sigma_1}(\Sigma_2) = \overrightarrow{\Sigma_1\Sigma_2} = \Sigma_1^{1/2}\text{log}\left(\Sigma_1^{-1/2}\Sigma_2\Sigma_1^{-1/2}\right)\Sigma_1^{1/2},
\end{equation*}
where the $\text{log}$ on the right-hand side denotes the matrix logarithm. Note that the matrix logarithm is well defined for any symmetric matrix $\Sigma$ due to the factoring out in the logarithm series of the rotation matrices of the spectral decomposition $\Sigma=UDU^\T$.

\subsection{Real data analysis}

\subsubsection{Segmenting and constructing summaries}

We now have all the tools to construct summaries of the traffic complexity in a given airspace during a certain time period. As input, we consider an image such as the ones displayed in the first row of Figure \ref{fig:villes}, showing the traffic over Paris, Toulouse and Lyon during a one-hour period of time. The color is related to the norm of the velocity, increasing from yellow to red. To simplify, we center and reduce the velocities $v_i$. The samples $(z_i,v_i)$ are seen as observations arriving in a random order, and the covariance matrix at $z_i$ is estimated according to \eqref{cov} using a truncated Gaussian kernel $K(x)=1/\sqrt{2\pi} \,\,e^{-x^2/2}\mathbf 1_{|x|<r}$. The truncation of size $r$ avoids useless computations. 
\begin{algorithm}[CLRQ for air traffic complexity analysis]\label{clrqatm}
\leavevmode\par \noindent
Initialization: 
\begin{itemize}
\item Choose $i_1,\hdots,i_n$ randomly among $1,\hdots,N$
\item Compute $\hat\Sigma(z_{i_1}),\hdots, \hat\Sigma(z_{i_n})$ using \eqref{cov}. 
\item Set $A_1(0)=\hat\Sigma(z_{i_1}),\hdots,A_n(0)=\hat\Sigma(z_{i_n})$.
\end{itemize}
For $k=0,\hdots,N-1$, let $(z_{i_k},v_{i_k})$ be the current observation.
\begin{enumerate}
\item Compute $\hat \Sigma(z_{i_k})$ using \eqref{cov}.
\item Find $j_{min}=\underset{j}{\text{argmin}}\, d(\hat\Sigma(z_{i_k}),A_j(k))$.
\item Set $A(k+1)=(A_1(k+1),\hdots,A_n(k+1)$ where 
\begin{align*}
A_{j_{min}}(k+1) &= \exp_{A_{j_{min}}(k)}\left(\gamma_{k+1} \overrightarrow{A_{j_{min}}(k)\hat\Sigma(z_{i_k})}\right),\\
A_j(k+1) &= A_j(k) \quad \forall j\neq j_{min}.
\end{align*}
\end{enumerate}
\end{algorithm}
In practice, we usually look for $n=3$ centers, i.e. the best approximation by three points. Indeed, we have found that generically, the centers can be ordered for the Loewner order when $n= 3$ but not always for $n>3$. (Recall that the Loewner order is a partial order on the space of SPD matrices, according to which $A\geq B$ if $A-B$ is positive semi-definite.) This can be explained by the fact that the Riemannian metric \eqref{metric} sorts by rank, and therefore the covariance matrices are segmented into those of rank close to zero (since we have centered the velocity field), those of rank $1$ and those of full rank. In the second row of Figure \ref{fig:villes}, these clusters are respectively shown in green, blue and red. As could be expected, the first cluster corresponds to zones with either an isolated trajectory or parallel trajectories, the second to simple crossings or variations of speed in a given trajectory, and the third to zones with high density and crossings involving many trajectories. Naturally, the choice of the size $r$ of the kernel's support has a great influence on the clustering, and it should be adjusted according to the minimum distance authorized between two aircrafts in a zone considered as non conflictual. 

\begin{figure}
\subfloat{\includegraphics[width=0.3\textwidth]{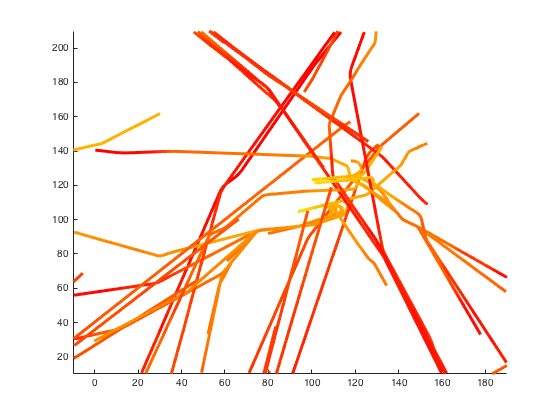}}
\subfloat{\includegraphics[width=0.3\textwidth]{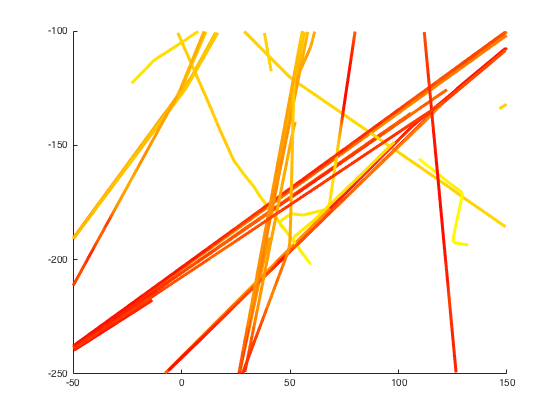}}
\subfloat{\includegraphics[width=0.3\textwidth]{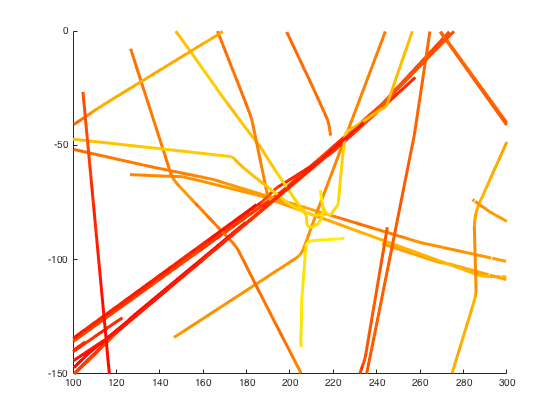}}\\
\subfloat{\includegraphics[width=0.3\textwidth]{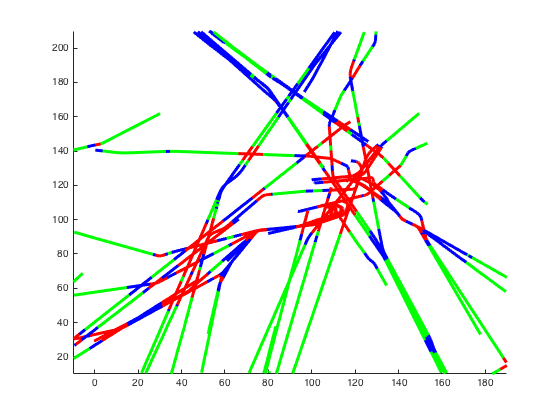}}
\subfloat{\includegraphics[width=0.3\textwidth]{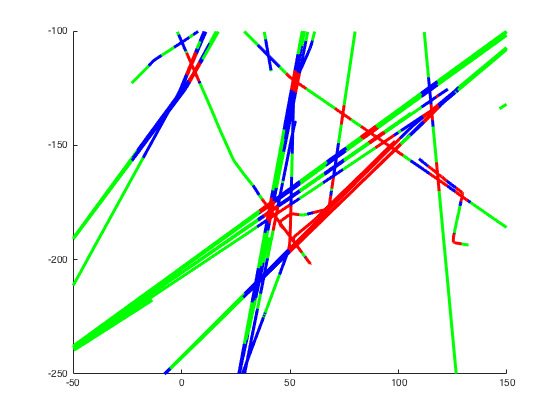}}
\subfloat{\includegraphics[width=0.3\textwidth]{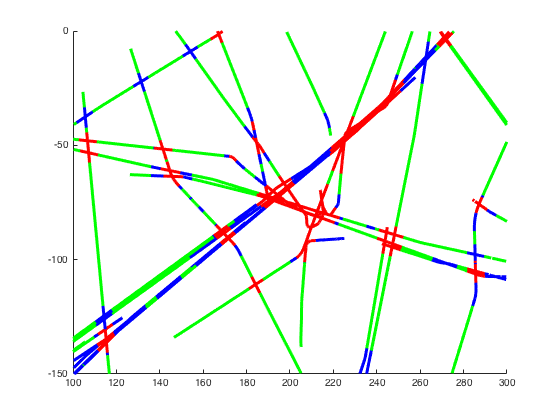}}
\caption{Clustering of the airspaces over Paris (left), Toulouse (middle) and Lyon (right).}
\label{fig:villes}
\end{figure}

Further results are shown in Figure \ref{fig:france}, where we consider the whole French airspace over different one-hour periods of time. The clusterings obtained using CLRQ is shown in the middle column. To illustrate the importance of the Riemannian setting with respect to the Euclidean one, we show results of Competitive Learning Vector Quantization on the same datasets, i.e. the same algorithm where the centers are updated using straight lines (linear interpolations between the matrix coefficients) and the distances are computed using the Frobenius norm. These results are shown in the right column of Figure \ref{fig:france}, and are visually less convincing: many zones that are perceived as complex by the human eye are not classified as such by the algorithm. In Figure \ref{fig:init}, we show that the initialization has little influence on the segmentation of the airspace, which is satisfactory.

\begin{figure}
\subfloat{\includegraphics[width=0.3\textwidth,height=0.3\textwidth]{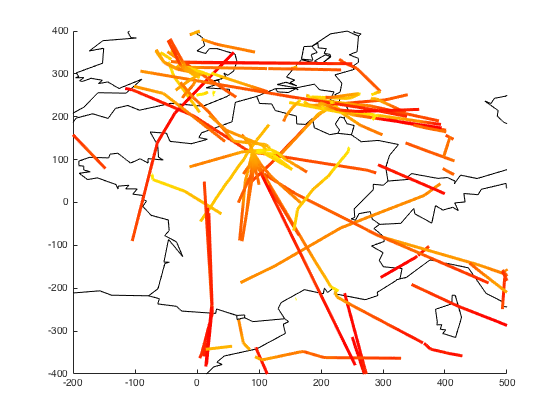}}
\subfloat{\includegraphics[width=0.3\textwidth,height=0.3\textwidth]{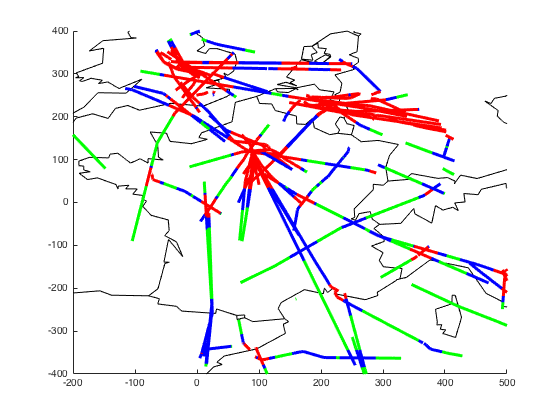}}
\subfloat{\includegraphics[width=0.3\textwidth,height=0.3\textwidth]{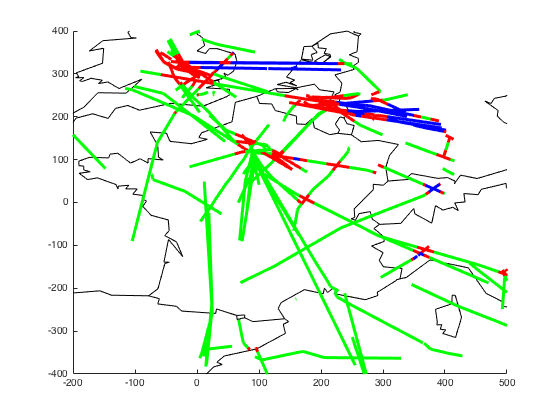}}\vspace*{-1em}\\
\subfloat{\includegraphics[width=0.3\textwidth,height=0.3\textwidth]{H1-v.png}}
\subfloat{\includegraphics[width=0.3\textwidth,height=0.3\textwidth]{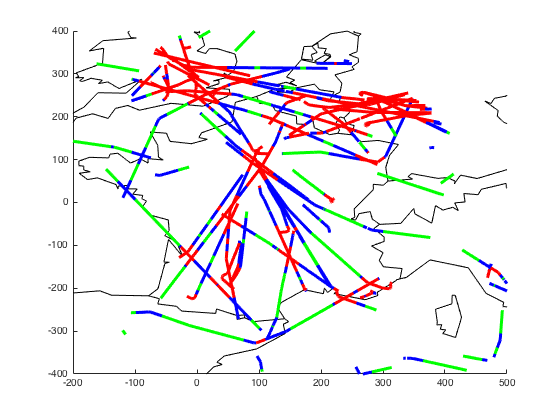}}
\subfloat{\includegraphics[width=0.3\textwidth,height=0.3\textwidth]{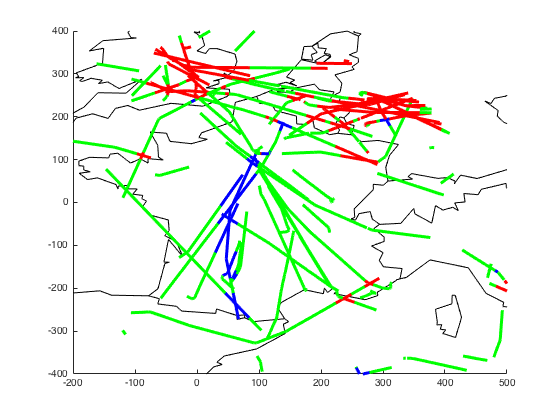}}\vspace*{-1em}\\
\subfloat{\includegraphics[width=0.3\textwidth,height=0.3\textwidth]{H3-v.png}}
\subfloat{\includegraphics[width=0.3\textwidth,height=0.3\textwidth]{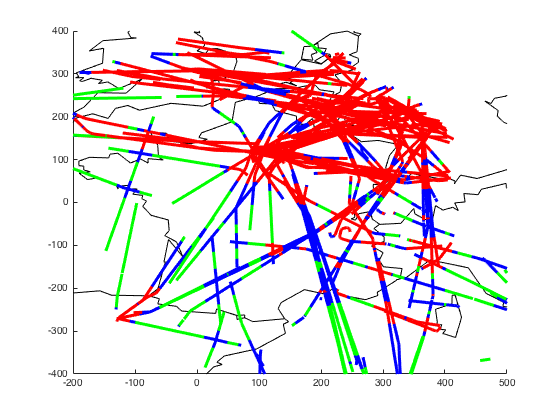}}
\subfloat{\includegraphics[width=0.3\textwidth,height=0.3\textwidth]{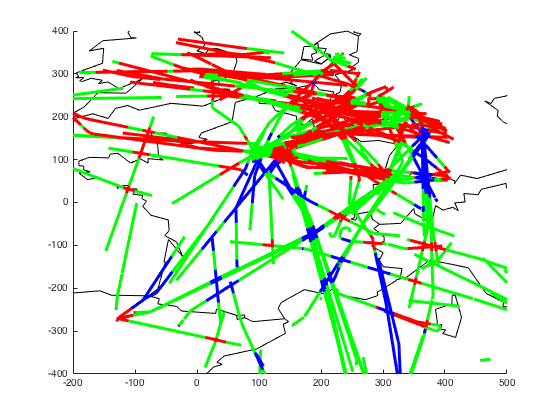}}\vspace*{-1em}\\
\subfloat{\includegraphics[width=0.3\textwidth,height=0.3\textwidth]{H6-v.png}}
\subfloat{\includegraphics[width=0.3\textwidth,height=0.3\textwidth]{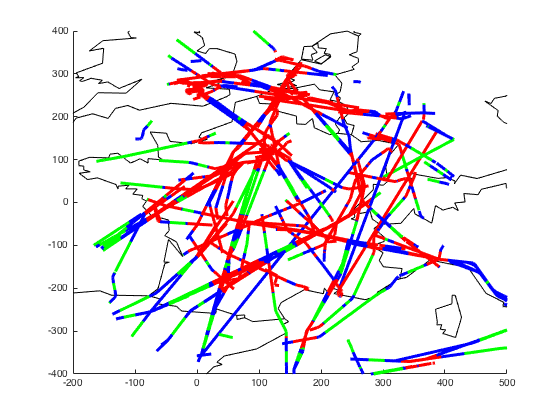}}
\subfloat{\includegraphics[width=0.3\textwidth,height=0.3\textwidth]{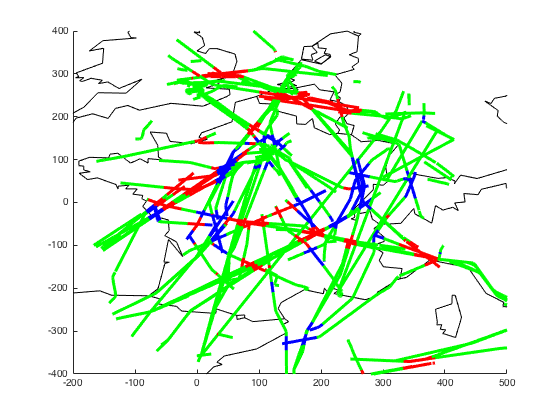}}
\caption{Clustering of the French airspace using Riemannian quantization (middle) versus vector quantization (right).}
\label{fig:france}
\end{figure}

\begin{figure}
\subfloat{\includegraphics[width=0.3\textwidth,height=0.3\textwidth]{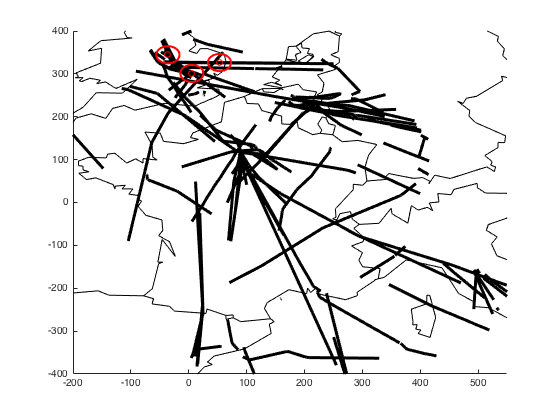}}
\subfloat{\includegraphics[width=0.3\textwidth,height=0.3\textwidth]{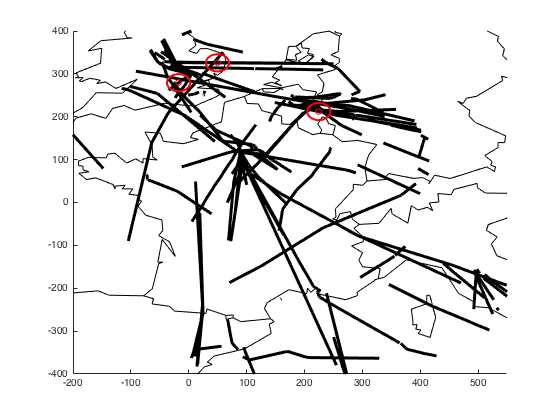}}
\subfloat{\includegraphics[width=0.3\textwidth,height=0.3\textwidth]{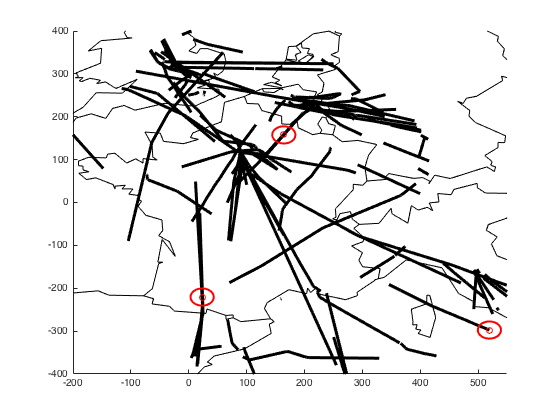}}\\
\subfloat{\includegraphics[width=0.3\textwidth,height=0.3\textwidth]{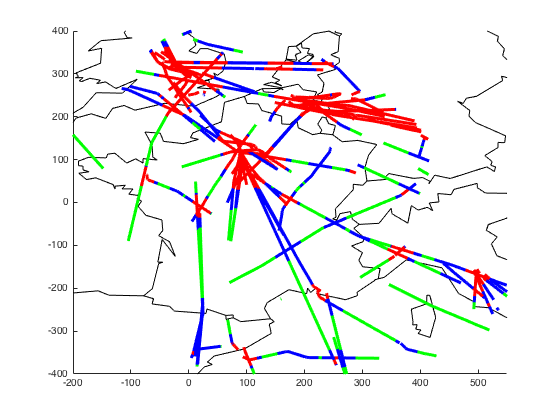}}
\subfloat{\includegraphics[width=0.3\textwidth,height=0.3\textwidth]{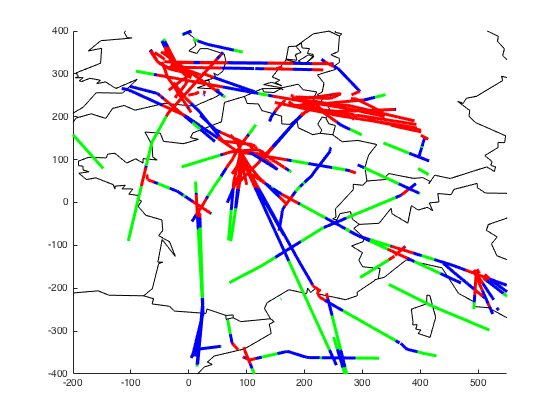}}
\subfloat{\includegraphics[width=0.3\textwidth,height=0.3\textwidth]{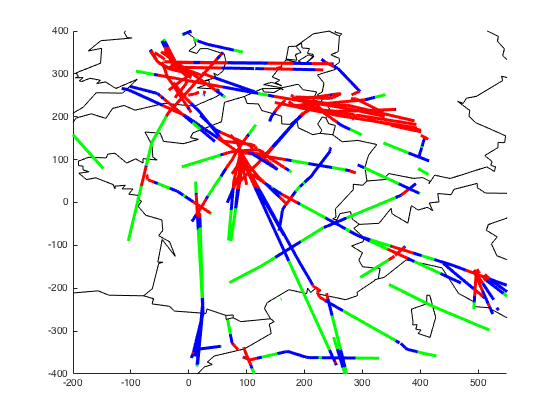}}
\caption{Clustering of the French airspace with 3 different initializations.}
\label{fig:init}
\end{figure}

\subsubsection{Comparing summaries}

Finally, we propose a way to compare our different summaries. A natural way to do so is through discrete optimal transport, which allows one to compute the distance between two discrete measures 
\begin{equation*}
\mu = \sum_{i=1}^m \mu_i\delta_{A_i} \quad \text{and} \quad \nu = \sum_{j=1}^n\nu_j\delta_{Bj}.
\end{equation*}
In our case, the $A_i$'s and $B_j$'s are SPD matrices. Optimal transport seeks to transport the mass from $\mu$ to $\nu$ in a way that minimizes a certain cost. Formally, a transport plan between $\mu$ and $\nu$ is a matrix $\pi = (\pi_{ij})_{i,j}$ with non-negative entries that verifies the two following properties for all $i=1,\hdots,m$ and $j=1,\hdots,n$,
\begin{equation*}
\sum_{j=1}^n \pi_{ij}=\mu_i \quad \text{and} \quad \sum_{i=1}^m \pi_{ij} = \nu_j.
\end{equation*}
The set of transport plans between $\mu$ and $\nu$ is denoted by $\Gamma(\mu,\nu)$. Intuitively, the value $\pi_{ij}$ represents the part of the mass $\mu_i$ transported from $A_i$ to $B_j$ to reconstruct $\nu_j$. Here, we measure the cost of transporting an infinitesimal unit of mass from $A_i$ to $B_j$ by the corresponding geodesic distance $d(A_i,B_j)$. The optimal transport plan is chosen to minimize the global cost, i.e. to be a solution of the following minimization problem
\begin{equation}\label{ot}
\min \left\{ \sum_{i=1}^m\sum_{j=1}^n \pi_{ij}d(A_i,B_j);\,\,\pi\in\Gamma(\mu,\nu)\right\}.
\end{equation}
Note that the minimal cost is the discrete $L^2$-Wasserstein distance between $\mu$ and $\nu$. In general, the linear programming problem \eqref{ot} is difficult and many different algorithms have been developed to solve it in various special cases \cite{merigot2016}. However, in our setting, the discrete measure involved are supported by a very small number of points ($3$ in the examples shown) and problem \eqref{ot} presents no difficulties. The distances between the summaries of the different traffic situations shown in the middle column of Figure \ref{fig:france} are given in Table \ref{table:france}. As expected, the first two situations are deemed similar but very different from the third one which has a much more complex traffic. The last situation is intermediary. In comparison, the different summaries of Figure \ref{fig:init} obtained for the same traffic situation but different initializations are at small distances from one another, as shown in Table \ref{table:init}.

\begin{table}
\caption{Distances between the summaries of Figure \ref{fig:france}.}
\label{table:france}
\begin{tabular}{|c|c|c|c|}
\hline
0.00 & 1.92 & 6.74 & 4.55 \\
\hline 
1.92 & 0.00 & 8.31 & 6.07\\
\hline
6.74 & 8.31 & 0.00 & 1.22\\
\hline
4.55 & 6.07 & 1.22 & 0.00\\
\hline
\end{tabular}
\end{table}

\begin{table}
\caption{Distances between the summaries of Figure \ref{fig:init}.}
\label{table:init}
\begin{tabular}{|c|c|c|}
\hline
0.000 & 0.033 & 0.031  \\
\hline 
0.033 & 0.000 & 0.016\\
\hline
0.031 & 0.016 & 0.000 \\
\hline
\end{tabular}
\end{table}

\section{Conclusion}

We have proposed a gradient descent type algorithm to find the best finite discrete approximation of a probability measure on a Riemannian manifold. This algorithm is adapted to large sets of data as it is online, and yields a clustering on top of a finite summary of the non-Euclidean data of interest. It is convergent when the manifold is complete and when its injectivity radius is uniformly bounded from below by a positive constant. We have used it to compute summaries of air traffic images in the form of finite numbers of covariance matrices representing different levels of local complexity, with associated weights corresponding to the occurrence of these levels of complexity in the images. In future work, we will consider best finite constrained quantization, i.e. restriction to finite approximations with equal weights, for the segmentation of the airspace in zones of equal complexity.

\section*{Acknowledgements}

We would like to thank Salem Said and Lionel Bombrun for sharing with us their code to generate Gaussian samples in the hyperbolic half-plane.

\section*{Appendix}

\begin{lemma} 
\label{lemgrad}
Let $x\in M$. The gradient of the function $f:a\mapsto d^2(x,a)$ is given by 
\begin{equation*}
\nabla_af  = -2 \log_ax = -2 \overrightarrow{ax}.
\end{equation*}
\end{lemma}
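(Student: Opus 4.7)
The plan is to derive the gradient from the first variation formula for the energy of the unique minimizing geodesic joining $a$ to $x$. Throughout, I will work under the standing assumption of the paper that $x$ and $a$ lie in a convex geodesic ball, so the logarithm $\log_a x = \overrightarrow{ax}$ is well defined and the geodesic $\gamma(t) = \exp_a(t\log_a x)$, $t\in[0,1]$, is the unique length-minimizing curve from $a$ to $x$. Because $\gamma$ has constant speed $\|\dot\gamma\|=d(x,a)$, its energy $E(\gamma)=\int_0^1\|\dot\gamma(t)\|^2\,dt$ equals $d(x,a)^2=f(a)$.

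To compute $df_a(w)$ for an arbitrary $w\in T_aM$, I would pick a smooth curve $s\mapsto a(s)$ with $a(0)=a$ and $\dot a(0)=w$, and build the two-parameter variation
\[
h(s,t) = \exp_{a(s)}\!\bigl(t\,\log_{a(s)} x\bigr),
\]
which is well defined for $s$ small by continuity of $\log$ on the convex ball. By construction, $h(s,0)=a(s)$, $h(s,1)=x$, and for each $s$ the curve $t\mapsto h(s,t)$ is the minimizing geodesic from $a(s)$ to $x$, so $E\bigl(h(s,\cdot)\bigr)=d(x,a(s))^2=f(a(s))$.

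Next I would apply the first variation formula for the energy functional to $h$. Since each longitudinal curve is a geodesic, the bulk term $\int_0^1\langle\nabla_{\dot\gamma}\dot\gamma,\partial_s h\rangle\,dt$ vanishes, leaving only the boundary contributions:
\[
\left.\frac{d}{ds}\right|_{s=0} E\bigl(h(s,\cdot)\bigr)
= 2\bigl\langle \dot\gamma(1),\partial_s h(0,1)\bigr\rangle - 2\bigl\langle \dot\gamma(0),\partial_s h(0,0)\bigr\rangle.
\]
Differentiating the endpoint identities gives $\partial_s h(0,0)=\dot a(0)=w$ and $\partial_s h(0,1)=0$; moreover $\dot\gamma(0)=\log_a x$. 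Substituting yields $df_a(w)=-2\langle\log_a x,w\rangle$. Since this holds for every $w\in T_aM$, the defining relation $df_a(w)=\langle\nabla_a f,w\rangle$ forces $\nabla_a f=-2\log_a x=-2\overrightarrow{ax}$.

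The only delicate point is justifying that the variation $h$ is smooth in $(s,t)$ and that the geodesic $t\mapsto h(s,t)$ remains the minimizer for all small $s$; both follow from the convexity of the ambient geodesic ball assumed in the paper, together with smoothness of the exponential map on its injectivity domain. Once this is granted, the first variation formula can be applied directly and the computation reduces to an evaluation of two boundary terms.
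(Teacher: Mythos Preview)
Your proof is correct and, in fact, somewhat cleaner than the paper's. Both arguments build a two-parameter family of geodesics joining a moving point to the fixed point $x$, but they extract the derivative differently. You invoke the first variation formula for the energy functional directly: since each longitudinal curve $t\mapsto h(s,t)$ is a geodesic, the bulk integral vanishes and only the boundary terms survive, one of which is zero because the endpoint $x$ is fixed. The paper instead writes $f(a(t))=\|\overrightarrow{xa(t)}\|^2$ and differentiates at the \emph{fixed} endpoint $x$, obtaining a term $2\langle\nabla_s J(0),\gamma_s(0)\rangle$ involving the covariant derivative of the Jacobi field $J(s)=\partial_t a(0,s)$ along the geodesic family; it then uses the Jacobi equation (specifically, that $\langle\nabla_s J,\gamma_s\rangle$ is constant along $\gamma$ because $\langle\mathcal R(J,\gamma_s)\gamma_s,\gamma_s\rangle=0$) to transport this inner product to the moving endpoint $s=1$, where $J(1)=u$ and $\gamma_s(1)=-\overrightarrow{ax}$. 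Your route is the standard textbook argument and avoids the Jacobi-field machinery entirely; the paper's route is more roundabout but, in effect, re-derives the boundary term of the first variation formula by hand via the Jacobi equation.
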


\begin{proof}
Let $a\in M$, $u\in T_a M$, and $(-\epsilon,\epsilon)\ni t\mapsto a(t)$ such that $a(0)=a, \dot a(0)=u$, so that the differential of $f$ at $a$ in $u$ is written, in terms of the norm associated to the Riemannian metric,
\begin{equation*}
T_af(u) = \left.\frac{d}{dt}\right|_{t=0} f(a(t))=\left.\frac{d}{dt}\right|_{t=0} d^2(x,a(t))=\left.\frac{d}{dt}\right|_{t=0}\left\|\overrightarrow{xa(t)}\right\|^2,
\end{equation*}
where $\overrightarrow{xa(t)}=\log_xa(t)$. Consider now $(-\epsilon,\epsilon)\times[0,1]\ni (t,s)\mapsto a(t,s)$ such that $a(t,0)=x, a(t,1)=a(t)$, and $s\mapsto a(t,s)$ is a geodesic for all $t$. Notice that $a_s(t,0)=\overrightarrow{xa(t)}$ and $a_t(0,1)=u$ if subscripts denote partial derivatives. Then, if $\nabla$ denotes covariant derivative
\begin{equation*}
T_af(u) = 2\left\langle \left.\nabla_{t}\right|_{t=0}\overrightarrow{xa(t)},\overrightarrow{xa(0)}\right\rangle=2\left\langle \nabla_{t}a_s(0,0),a_s(0,0)\right\rangle=2\left\langle \nabla_{s}J(0),\gamma_s(0)\right\rangle,
\end{equation*}
if $J(s):=a_t(0,s)$ and $\gamma(s):= a(0,s)$. Since it measures the way geodesics starting from $x$ spread out, the vector field $J$ is a Jacobi field along the geodesic $\gamma$ and therefore verifies the classical equation in terms of the curvature tensor $\mathcal R$ of $M$, $\nabla^2_sJ=-\mathcal R(J,\gamma_s)\gamma_s$, which implies $\langle \nabla_s^2J,\gamma_s\rangle=0$, leading to the scalar product $\left\langle \nabla_{s}J,\gamma_s\right\rangle$ being constant. This gives, since $J(0)=0$,
\begin{equation*}
\left\langle J(s),\gamma_s(s)\right\rangle =\left\langle \nabla_{s}J(0),\gamma_s(0)\right\rangle s+ \left\langle J(0),\gamma_s(0)\right\rangle= \left\langle \nabla_{s}J(0),\gamma_s(0)\right\rangle s,
\end{equation*}
and so, if $\overrightarrow{xa}^\parallel$ denotes the parallel transport of the vector $\overrightarrow{xa}=\log_xa$ from $x$ to $a$ along $\gamma$, then
\begin{equation*}
\left\langle \nabla_{s}J(0),\gamma_s(0)\right\rangle =\left\langle J(1),\gamma_s(1)\right\rangle =\langle u, \gamma_s(1) \rangle.
\end{equation*}
Setting $\tilde\gamma(s)=\gamma(1-s)$, we get $\gamma_s(1)=-\tilde\gamma_s(0)=-\overrightarrow{ax}=-\log_ax$ and so finally
\begin{equation*}
T_af(u)=-2\langle u, \log_ax \rangle,
\end{equation*}
which completes the proof.
\end{proof}

\begin{lemma}
\label{lemmass}
Let $\alpha=(a_1,\hdots,a_n)\in M^n$ and $C_i(\alpha)$ denote the Voronoi cell associated to $a_i$ for all $i=1,\hdots,n$. No mass is assigned by $\mu$ to the boundaries of the Voronoi diagram
\begin{equation*}
\mu\left(\partial C_i(\alpha)\right)=0.
\end{equation*}
\end{lemma}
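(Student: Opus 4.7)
The plan is to show the boundary of any Voronoi cell is contained in a finite union of equidistant loci $E_{ij}=\{x\in M:d(x,a_i)=d(x,a_j)\}$, reduce each $E_{ij}$ to a smooth hypersurface (away from a measure-zero bad set), and invoke the fact that $\mu$ admits a density with respect to the Riemannian volume form (this was assumed when $\mu$ was introduced in the paper). Then $\mu(\partial C_i(\alpha))\le\sum_{j\ne i}\mu(E_{ij})=0$.

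Concretely, first I would write $\partial C_i(\alpha)\subset\bigcup_{j\ne i}E_{ij}$, which is immediate from the definition of Voronoi cells. It then suffices to prove $\mathrm{vol}(E_{ij})=0$ for each $j\ne i$, since $\mu\ll\mathrm{vol}$. Fix $j\ne i$ and set
\[
f_{ij}(x)=d(x,a_i)^2-d(x,a_j)^2.
\]
Let $\mathrm{Cut}(a_k)$ denote the cut locus of $a_k$. Outside $\mathcal{N}:=\mathrm{Cut}(a_i)\cup\mathrm{Cut}(a_j)$, both squared-distance functions are smooth, so $f_{ij}$ is smooth. By Lemma \ref{lemgrad} applied at each base point (using symmetry of $d^2$ in its two arguments), the gradient reads
\[
\nabla_x f_{ij}=-2\,\overrightarrow{xa_i}+2\,\overrightarrow{xa_j}.
\]
On $M\setminus\mathcal{N}$, the exponential map is a diffeomorphism from an open set of each tangent space to its image, so $\overrightarrow{xa_i}=\overrightarrow{xa_j}$ would force $a_i=a_j$, contradicting pairwise distinctness. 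Hence $\nabla f_{ij}$ does not vanish on $E_{ij}\setminus\mathcal{N}$, and by the implicit function theorem $E_{ij}\setminus\mathcal{N}$ is a smooth embedded hypersurface of $M$, in particular of Riemannian volume zero.

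It remains to handle the points of $E_{ij}$ that lie in $\mathcal{N}$. Here I would invoke the classical fact (see e.g.\ Sakai, Riemannian Geometry, or Gallot--Hulin--Lafontaine) that for every point $p$ of a complete Riemannian manifold, the cut locus $\mathrm{Cut}(p)$ has Riemannian volume zero. Thus $\mathrm{vol}(\mathcal{N})=0$, and combining this with the previous step gives $\mathrm{vol}(E_{ij})=0$. Since $\mu$ has a density with respect to $\mathrm{vol}$, $\mu(E_{ij})=0$, and summing over $j\ne i$ yields $\mu(\partial C_i(\alpha))=0$.

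The only non-routine ingredient is the measure-zero property of the cut locus; everything else is a standard application of the implicit function theorem combined with Lemma \ref{lemgrad}. I expect the main subtlety in the write-up to be carefully separating the smooth regime (where $\nabla f_{ij}$ is well-defined and shown to be nonzero) from the cut-locus regime (where smoothness can fail but the set is already negligible), rather than any deep technicality.
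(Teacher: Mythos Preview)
Your argument is correct and follows essentially the same route as the paper: both reduce $\partial C_i(\alpha)$ to a finite union of equidistant sets $\{d(\cdot,a_i)=d(\cdot,a_j)\}$ and show these are codimension-one submanifolds via the regular value theorem, hence $\mu$-null because $\mu$ has a density. The only differences are cosmetic or in level of care: the paper works with $x\mapsto d(a_i,x)-d(a_j,x)$ directly and simply asserts it is a submersion, whereas you use the squared-distance difference (so Lemma~\ref{lemgrad} applies verbatim) and explicitly separate out the cut loci to justify smoothness and non-vanishing of the gradient. Your version is the more careful of the two; the paper's tacitly leans on the standing assumption that $\supp\mu$ lies in a convex geodesic ball, which makes cut-locus issues disappear on the relevant set.
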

\begin{proof}
For any $n$-tuple $\alpha=(a_1,\hdots,a_n)$, the $i^{th}$ Voronoi cell can be written
\begin{equation*}
C_i(\alpha) = \cap_{j\neq i} H(a_i,a_j), \quad \text{where} \quad H(a,b)=\{x\in K, d(x,a)\leq d(x,b)\},
\end{equation*}
which gives
\begin{equation*}
\partial C_i(\alpha) = C_i(\alpha)\cap \mathring{C_i}(\alpha)^c = C_i(\alpha)\cap \big(\ \cap_{j\neq i}\mathring{H}(a_i,a_j)\big)^c = \cup_{j\neq i} \partial H(a_i,a_j) \cap C_i(\alpha). 
\end{equation*}
Now for any $i\neq j$, the subset $\partial H(a_i,a_j)=\{x \in K, d(x,a_i)=d(x,a_j)\}$ defined as the kernel of the submersion $x \mapsto d(a_i,x)-d(a_j,x)$ is a submanifold of $M$ of codimension 1,  yielding $\mu(\partial H(a_i,a_j))=0$ and therefore $\mu(\partial C_i(\alpha))=0$.
\end{proof}

\bibliographystyle{plain}
\bibliography{bibliographie}

\begin{thebibliography}{10}

\bibitem{arnaudon2013riemannian}
M.~Arnaudon, F.~Barbaresco, and L.~Yang.
\newblock Riemannian medians and means with applications to radar signal
  processing.
\newblock {\em IEEE Journal of Selected Topics in Signal Processing},
  7(4):595--604, 2013.

\bibitem{arnaudon2013medians}
Marc Arnaudon, Fr{\'e}d{\'e}ric Barbaresco, and Le~Yang.
\newblock Medians and means in riemannian geometry: existence, uniqueness and
  computation.
\newblock In {\em Matrix Information Geometry}, pages 169--197. Springer, 2013.

\bibitem{biau2008}
G.~Biau, L.~Devroye, and G.~Lugosi.
\newblock On the performance of clustering in hilbert spaces.
\newblock {\em IEEE Transactions on Information Theory}, 54(2):781--790, 2008.

\bibitem{bigot2017}
J.~Bigot, R.~Gouet, T.~Klein, and A.~L{\'o}pez.
\newblock Geodesic pca in the wasserstein space by convex pca.
\newblock In {\em Annales de l'Institut Henri Poincar{\'e}, Probabilit{\'e}s et
  Statistiques}, volume~53, pages 1--26. Institut Henri Poincar{\'e}, 2017.

\bibitem{bonnabel2013}
S.~Bonnabel.
\newblock Stochastic gradient descent on riemannian manifolds.
\newblock {\em IEEE Transactions on Automatic Control}, 58(9):2217--2229, 2013.

\bibitem{bouchitte2011}
G.~Bouchitt{\'e}, C.~Jimenez, and R.~Mahadevan.
\newblock Asymptotic analysis of a class of optimal location problems.
\newblock {\em Journal de math{\'e}matiques pures et appliqu{\'e}es},
  95(4):382--419, 2011.

\bibitem{cabrelli1995}
C.~A. Cabrelli and U.~M. Molter.
\newblock The kantorovich metric for probability measures on the circle.
\newblock {\em Journal of Computational and Applied Mathematics},
  57(3):345--361, 1995.

\bibitem{calvo1991}
Miquel Calvo and Josep~Maria Oller.
\newblock An explicit solution of information geodesic equations for the
  multivariate normal model.
\newblock {\em Statistics \& Risk Modeling}, 9(1-2):119--138, 1991.

\bibitem{cheeger2008}
Jeff Cheeger and David~G Ebin.
\newblock {\em Comparison theorems in Riemannian geometry}, volume 365.
\newblock American Mathematical Soc., 2008.

\bibitem{COOK2015149}
A.~Cook, H.~Blom, F.~Lillo, R.~Mantegna, S.~Miccichè, D.~Rivas, R.~Vázquez,
  and M.~Zanin.
\newblock Applying complexity science to air traffic management.
\newblock {\em Journal of Air Transport Management}, 42:149 -- 158, 2015.

\bibitem{del2010}
D.~Delahaye and S.~Puechmorel.
\newblock Air traffic complexity based on dynamical systems.
\newblock In {\em Proceedings of the 49th CDC conference}. IEEE, 2010.

\bibitem{fletcher2007}
P.~T. Fletcher and S.~Joshi.
\newblock Riemannian geometry for the statistical analysis of diffusion tensor
  data.
\newblock {\em Signal Processing}, 87(2):250--262, 2007.

\bibitem{fletcher2004}
P.~T. Fletcher, C.~Lu, S.~M. Pizer, and S.~Joshi.
\newblock Principal geodesic analysis for the study of nonlinear statistics of
  shape.
\newblock {\em IEEE transactions on medical imaging}, 23(8):995--1005, 2004.

\bibitem{frechet1948}
Maurice Fr{\'e}chet.
\newblock Les {\'e}l{\'e}ments al{\'e}atoires de nature quelconque dans un
  espace distanci{\'e}.
\newblock {\em Ann. Inst. H. Poincar{\'e}}, 10(3):215--310, 1948.

\bibitem{graf2007}
S.~Graf and H.~Luschgy.
\newblock {\em Foundations of quantization for probability distributions}.
\newblock Springer, 2007.

\bibitem{iacobelli2016}
M.~Iacobelli.
\newblock Asymptotic quantization for probability measures on riemannian
  manifolds.
\newblock {\em ESAIM: Control, Optimisation and Calculus of Variations},
  22(3):770--785, 2016.

\bibitem{jost2008}
J.~Jost.
\newblock {\em Riemannian geometry and geometric analysis}, volume 42005.
\newblock Springer, 2008.

\bibitem{karcher1977}
H.~Karcher.
\newblock Riemannian center of mass and mollifier smoothing.
\newblock {\em Communications on pure and applied mathematics}, 30(5):509--541,
  1977.

\bibitem{kendall1984}
D.~G. Kendall.
\newblock Shape manifolds, procrustean metrics, and complex projective spaces.
\newblock {\em Bulletin of the London Mathematical Society}, 16(2):81--121,
  1984.

\bibitem{kloeckner2012}
B.~Kloeckner.
\newblock Approximation by finitely supported measures.
\newblock {\em ESAIM: Control, Optimisation and Calculus of Variations},
  18(2):343--359, 2012.

\bibitem{dyndens1998}
I.~V. Laudeman, S.~G. Shelden, R.~Branstrom, and C.~L. Brasil.
\newblock Dynamic density: An air traffic management metric.
\newblock Technical Report NASA/TM-1998-112226, NASA, 1998.

\bibitem{lebrigant2017}
A.~Le~Brigant.
\newblock Computing distances and geodesics between manifold-valued curves in
  the srv framework.
\newblock {\em arXiv preprint arXiv:1601.02358}, 2016.

\bibitem{lee2007}
K.~Lee, E.~Feron, and A.~Prichett.
\newblock Air traffic complexity : An input-output approach.
\newblock In {\em Proceedings of the US Europe ATM Seminar}, pages 2--9.
  Eurocontrol-FAA, 2007.

\bibitem{lovric2000}
Miroslav Lovri{\'c}, Maung Min-Oo, and Ernst~A Ruh.
\newblock Multivariate normal distributions parametrized as a riemannian
  symmetric space.
\newblock {\em Journal of Multivariate Analysis}, 74(1):36--48, 2000.

\bibitem{merigot2016}
Q.~M{\'e}rigot and E.~Oudet.
\newblock Discrete optimal transport: complexity, geometry and applications.
\newblock {\em Discrete \& Computational Geometry}, 55(2):263--283, 2016.

\bibitem{alldata2018}
G.~Mykoniatis, F.~Nicol, and S.~Puechmorel.
\newblock A new representation of air traffic data adapted to complexity
  assessment.
\newblock In {\em Proceedings of ALLDATA2018}, pages 28--33. IARA, 2018.

\bibitem{nadaraya1964}
E.~A. Nadaraya.
\newblock On estimating regression.
\newblock {\em Theory of Probability \& Its Applications}, 9(1):141--142, 1964.

\bibitem{pages2006}
G.~Pages.
\newblock Quadratic optimal functional quantization of stochastic processes and
  numerical applications.
\newblock In {\em Monte Carlo and Quasi-Monte Carlo Methods 2006}, pages
  101--142. Springer, 2008.

\bibitem{pages2015}
G.~Pag{\`e}s.
\newblock Introduction to vector quantization and its applications for
  numerics.
\newblock {\em ESAIM: proceedings and surveys}, 48:29--79, 2015.

\bibitem{pages2004}
G.~Pag{\`e}s, H.~Pham, and J.~Printems.
\newblock Optimal quantization methods and applications to numerical problems
  in finance.
\newblock In {\em Handbook of computational and numerical methods in finance},
  pages 253--297. Springer, 2004.

\bibitem{pennec2006intrinsic}
X.~Pennec.
\newblock Intrinsic statistics on riemannian manifolds: Basic tools for
  geometric measurements.
\newblock {\em Journal of Mathematical Imaging and Vision}, 25(1):127, 2006.

\bibitem{pennec2006riemannian}
X.~Pennec, P.~Fillard, and N.~Ayache.
\newblock A riemannian framework for tensor computing.
\newblock {\em International Journal of computer vision}, 66(1):41--66, 2006.

\bibitem{PRAN2011}
M.~Prandini, L.~Piroddi, S.~Puechmorel, and S.~L. Brazdilova.
\newblock Toward air traffic complexity assessment in new generation air
  traffic management systems.
\newblock {\em IEEE Transactions on Intelligent Transportation Systems},
  12(3):809--818, Sept 2011.

\bibitem{rabin2011}
J.~Rabin, J.~Delon, and Y.~Gousseau.
\newblock Transportation distances on the circle.
\newblock {\em Journal of Mathematical Imaging and Vision}, 41(1-2):147, 2011.

\bibitem{said2017}
S.~Said, L.~Bombrun, Y.~Berthoumieu, and J.~H. Manton.
\newblock Riemannian gaussian distributions on the space of symmetric positive
  definite matrices.
\newblock {\em IEEE Transactions on Information Theory}, 63(4):2153--2170,
  2017.

\bibitem{sommer2010}
S.~Sommer, F.~Lauze, S.~Hauberg, and M.~Nielsen.
\newblock Manifold valued statistics, exact principal geodesic analysis and the
  effect of linear approximations.
\newblock In {\em European conference on computer vision}, pages 43--56.
  Springer, 2010.

\bibitem{watson1964}
G.~S. Watson.
\newblock Smooth regression analysis.
\newblock {\em Sankhy{\=a}: The Indian Journal of Statistics, Series A}, pages
  359--372, 1964.

\bibitem{yin2010}
J.~Yin, Z.~Geng, R.~Li, and H.~Wang.
\newblock Nonparametric covariance model.
\newblock {\em Statistica Sinica}, 20:469, 2010.

\end{thebibliography}

\end{document}